\newcommand*\cpp{C\kern-0.2ex\raisebox{0.4ex}{\scalebox{0.8}{+\kern-0.4ex+}}}
\newcommand{\slfrac}[2]{\left.#1\middle/#2\right.}
\DeclareMathOperator*{\med}{med}
\DeclareMathOperator*{\MAD}{mad}
\DeclareMathOperator*{\ave}{ave}
\DeclareMathOperator*{\SVD}{svd}
\DeclareMathOperator*{\rank}{rank}
\DeclareMathOperator*{\var}{var}
\DeclareMathOperator*{\od}{OD}
\DeclareMathOperator*{\sd}{SD}
\DeclareMathOperator*{\argmin}{argmin}
\DeclareMathOperator{\Tr}{Tr}
\definecolor{ROBPCA}{RGB}{255,0,0}
\definecolor{PcaPP}{RGB}{0,0,255}
\definecolor{PcaL}{RGB}{0,255,0}
\definecolor{HCS}{RGB}{255,0,255}
\begin{document}

%%%%%%%%%%%%%%%%%%%%%%%%%%%%%%%%%%%%%%%%%%%%%%%%%%%%%%%%%%%%%%%%%%%%%%%%%%%%%%

  \title{\bf The FastHCS Algorithm for Robust PCA}
  \author{Eric Schmitt and Kaveh Vakili}
    
    \institute{E. Schmitt \at
              Protix\\
            Industriestraat 3\\
            5107 NC Dongen
              Tel.: +31162782501\\
              \email{eric.schmitt@protix.eu} 
}

  \maketitle

\begin{abstract}
Principal component analysis (PCA) is widely used to analyze high-dimensional data, but it is very sensitive to outliers.
Robust PCA methods seek fits that are unaffected by the outliers and can therefore be trusted to reveal them.
FastHCS (High-dimensional Congruent Subsets) is a robust PCA algorithm suitable for high-dimensional applications, including cases where the number of variables exceeds the number of observations.
After detailing the FastHCS algorithm, we carry out an extensive simulation study and three real
 data applications, the results of which show that FastHCS is systematically more robust to outliers than state-of-the-art methods.
\end{abstract}

\noindent%
{\it Keywords:}  High-dimensional data, outlier detection, computational statistics, exploratory data analysis

\section{Introduction}

Principal component analysis (PCA) is widely used to explore high-dimensional data. 
It centers and rotates the original $p$-dimensional  measurements to construct a small number $q$ of new orthonormal variables, called \emph{principal components}, that  account for most of the variation in the data.
However, classical PCA is very sensitive to outliers. 
Outliers are observations that are inconsistent
with the multivariate pattern of the majority of the data. 
If left unchecked, they
influence the estimated parameters by 
disproportionately pulling the fit towards themselves.
%This obscures the main relationships in the data and their true outlyingness. 
In this way, outliers obscure the main relationships in the data and their true outlyingness.
In practice, we want to find the outliers to bound their 
influence on the fit and to study as objects of interest in their own right.
For these reasons, we need robust PCA methods that meet the following criteria: 
\begin{inparaenum}[(1)]
\item Like classical PCA, a robust PCA method should handle cases where the number of variables exceeds the number of observations,
\item  and it should be shift and rotation equivariant, meaning that if the data are shifted or rotated the estimated parameters should transform accordingly.
\item It should be computable for high-dimensional data.
%\item Its estimation performance should not be contingent on the distribution of the outliers.
%\item It should be insensitive to the rate of contamination.
\item It should accurately describe the multivariate pattern of the majority of the observations, even when the data is heavily contaminated by outliers.
\item It should have a high breakdown point; a measure an estimator's robustness to outliers in the data. 
\item It should be insensitive to the dimensionality of the data.

\end{inparaenum}
Criteria (1)-(3) are natural for any PCA method. Criteria (4)-(5) relate to robustness. Criterion (6) is related to both concerns. We find that state-of-the-art robust PCA algorithms have most of these properties, but that, surprisingly, many instances can be found where they fail to satisfy Criterion (4). 
In this paper, we introduce a robust PCA algorithm, FastHCS, to meet these criteria (HCS for high-dimensional congruent subset).
In the next section we outline FastHCS. 
Then, in Sections~\eqref{hcs:s3} and~\eqref{hcs:s4} we compare it 
to several state-of-the-art methods on simulated data and three real data applications which show that in many settings only FastHCS can be relied upon to provide a robust PCA solution.

\section{FastHCS}\label{sfhcs}

Given an $n \times p$ data matrix $\pmb Y=\{\pmb y_i\}_{i=1}^n$ and for a fixed $2 \leqslant q<\min(p,n)$, the FastHCS algorithm searches for a subset of size at least  $h=\lceil(n+q+1)/2\rceil$ free of outliers
%of observations belonging to $\pmb X$ 
(this is the minimal value of $h$ such that there are at least $(q+1)$ clean observations in each candidate subset).

If $p>n$, FastHCS computes the mean-centered data matrix $\tilde{\pmb X}=\pmb{Y}-\pmb 1_n^\top(\ave_{i=1}^n\pmb y_i)$, and performs the kernel eigenvalue decomposition of $\tilde{\pmb X}\tilde{\pmb X}^\top=\pmb U\pmb L\pmb U^\top$ where $\pmb U$ is an $n\times r$ matrix, $\pmb L$ is an $r\times r$ matrix,  $r:=\rank(\pmb X)$ and $\pmb{L}$ is a diagonal matrix with the eigenvalues on the diagonal. 
Then, FastHCS works with the $n \times r$ matrix $\pmb X=\tilde{\pmb X}\tilde{\pmb X}^\top\pmb U (\pmb L)^{-1/2}$. The transformation from $\pmb{Y}$ to $\pmb X$ causes no loss of information or robustness since we retain all of the components corresponding to non-zero eigenvalues. However, this transformation reduces the computational cost of the subsequent steps of the algorithm. 
At the end of the algorithm, FastHCS reverses these transformations so that the returned parameter estimates are consistent with conventional PCA. When $n\leqslant p$, we simply set $\pmb X=\pmb Y$.

\subsection{The $I$-index $h$-subset}
\label{sec:Iindexsubset}
% \subsubsection{Candidate $h$-subsets}\label{sec:growsubsets} 
The $I$-index is a subset selection criterion first introduced in~\cite{hcs:VS13}  where it is used  to identify an outlier free subset to serve as the basis of the robust PCS location and scatter estimator. The $I$-index was designed to be  insensitive to the configuration of the outliers and consequently, as we show in that article, the fit found by FastPCS is nearly unaffected by the presence of outliers in the data (we refer to this as quantitative robustness). 
In~\citep{SV14} we further show that the PCS estimates also have the maximum possible breakdown point (we refer to this as qualitative robustness).
Robust location and scatter estimation are also important for PCA. In the PCS context, the $I$-index is applied to the observations in their original dimensionality, and one approach to achieving robust PCA would be to use the robust PCS covariance estimate as a starting point for PCA. However, this approach does not satisfy Criterion (3) for robust PCA since it is not possible to perform PCS when the number of dimensions is greater than the number of observations. This subsection describes how the $I$-index can be extended to the PCA context by applying it to projections of the data on to subspaces.

To begin, FastHCS draws $M$ random subsets of size $(q+1)$ from $\pmb X$ without replacement, where $M$ is given by:
      \begin{equation}\label{hcs:Mval}
        M=\left \lceil\frac{\log(0.01)}{\log(1-(e/n)^{q+1})}\right \rceil\;,
      \end{equation} 
and where $h\leqslant e<n$ is an integer specifying the number of uncontaminated observations,
so that the probability of getting at least one uncontaminated starting subset is at least 99\%~\citep{hcs:S81}. 
By default we set $e=h$. However, if the user is sure that the contamination rate of the sample is lower than $\slfrac{(n-h)}{n}$, we offer the possibility (as in~\cite{MY95}) 
of using this information to reduce the computational cost of 
running FastHCS. 
Denote these $(q+1)$-subsets as $\{H^m_0\}_{m=1}^{M}$.
The SVD decomposition of the observations indexed by $H^m_0$ is: 
\begin{equation}
\displaystyle\SVD_{i\in H^m_0}\left(\slfrac{(\pmb{x}_i-\pmb t^m_0)}{\sqrt{q}}\right)=\pmb{U}^m_0(\pmb L^m_0)^{1/2}(\pmb{P}^m_0), \nonumber
\end{equation}
where $\pmb t_0^m=\ave_{i\in H^m_0}\pmb{x}_i$ is the estimated center, $\pmb{L}^m_0$ is a diagonal matrix for which the non-zero elements $(\pmb{L}^m_0)_j\; j = 1, \dots, q$ are the descending eigenvalues of the PCA model fited to $\{\pmb x_i: i\in H^m_0\}$, and the eigenvectors $\pmb P^m_{0,q}$ are the first $q$ loadings of this model.
 Next, we compute the score matrix $\pmb S^m_0$ with $n$ rows $\pmb s^m_{0,i}$:
\begin{equation}
\pmb{s}^m_{0,i}=(\pmb{x}_i-\pmb t^m_0)\pmb{P}^m_{0,q},\quad1\leqslant i\leqslant n\nonumber
\end{equation}
which is the projection of the re-centered rows of $\pmb{X}$ on to
the subspace spanned by the first $q$ loadings of  $\{\pmb x_i:i\in H^m_0\}$. 
To measure the outlyingness of an $\pmb{s}_{0,i}^{m}$ to the members of $\{\pmb{s}^m_{0,i}:i\in H^m_0\}$, we will use its squared orthogonal distance to $\pmb a^{m}_k$, the direction normal to the 
hyperplane through $q$ members of  $\{\pmb{s}^m_{0,i}:i\in H^m_0\}$ drawn at random:
\begin{equation}\label{hcs:basedist}
    d^2_i(\pmb a^{m}_k,\pmb S^m_0)=\slfrac{\left((\pmb{s}_{0,i}^{m})^\top\pmb a^{m}_k-1\right)^2}{||\pmb a^{m}_k||^2}\;, \nonumber
\end{equation} 
and, to remove the dependence of this measure on the direction $\pmb a_k^m$, we average it over $K$ such directions $\{\pmb a^{m}_k\}_{k=1}^K$: 
\begin{equation}\label{hcs:growdist}
 D_i(H^m_{0})=\displaystyle\ave_{k=1}^{K}  
    \frac{ d^2_i(\pmb a^{m}_k,\pmb S^{m}_0)}
         {\displaystyle\ave_{i\in H^{m}_{0}} d^2_i(\pmb a^{m}_k,\pmb S^{m}_0)}
,\;\;1\leqslant i\leqslant n.
\end{equation}
(In Remark 1 below we discuss how we set the value of the parameter $K$).
The denominator in Equation~\eqref{hcs:growdist} normalizes these distances across the directions $\pmb a_k^m$. 

We can now describe the computation of the first step of FastHCS. 
For a given a $(q+1)$-subset  $H^m_0$ of $\{1:n\}$  and its corresponding matrix $\pmb S^m_0$, 
Algorithm 1 returns and $h$-subset $H^m$ of 
indexes of $\{1:n\}$ using an iterative procedure we call \emph{growing steps}. 
In each step $w$, $H^m_w$ is updated and contains the indexes of the  $\omega_w$ observations with smallest values of $D_i(H^m_{w-1})$. The value of $\omega_w$ itself increases incrementally from 
$\lceil\slfrac{(n-q-1)}{(2W)}\rceil+q+1$ to 
$h$ in $W$ steps. 
These steps do not have 
a convergence criterion, so the number 
of iterations $W$ must be set in advance (In Remark 1 below we discuss how we set the value of the parameter $W$). 

\vskip0.15cm
\hrule
\vskip0.15cm
\noindent \begin{equation}
\text{Algorithm 1: growingStep$(H_0^m,\pmb X,q)$} \nonumber
\end{equation}
\hrule
\begin{tabbing}

for \=$w=1$ to $W$ do:\\
\indent \>
\begin{math}
  D_i(H^m_{w-1}) \gets \displaystyle\ave_{k=1}^{K}  
    \frac{ d^2_i(\pmb a^{m}_k,\pmb S^{m}_{0})}
         {\displaystyle\ave_{i\in H^{m}_{w-1}} d^2_i(\pmb a^{m}_k,\pmb S^{m}_{0})}
,\;\;1\leqslant i\leqslant n
\end{math}\\
\indent\>set 
\begin{math}
  \omega_w\gets\lceil\slfrac{(n-q-1)w}{(2W)}\rceil+q+1            
\end{math}\\
\indent \>set 
\begin{math}
  H^m_w \gets \left\{i: D_i(H^m_{w-1}) \leqslant D_{(\omega_w)}(H^m_{w-1})\right\}
\end{math}\\
%\>\>\>     \text{ (`concentration step')} \\
end for\\
\begin{math}
H^m \gets H^m_W
\end{math}
\end{tabbing}
\hrule
\vskip0.3cm

 After growing $M$ candidate $H^m$'s, FastHCS evaluates each using a criterion we call the $I$-index, and fits a robust PCA model to the $H^m$ having smallest value of the $I$-index. For a given $h$-subset $H^m$ and direction $\pmb a^m_k$, we define a subset $H^{m}_k$ that is optimal with respect to $\pmb a^m_k$ in the sense that it indexes 
 the $h$ observations with the smallest values of $d^2_i(\pmb a^{m}_k,\pmb S^{m}_0)$. 
 More precisely, denoting $d_{(h)}$ the $h^{th}$ order 
statistic of a vector $\pmb d$, we have:
\begin{equation}
H^{m}_k=\{i:d_{i}^2(\pmb a^{m}_k,\pmb S^{m}_0)\leqslant 
d^2_{(h)}(\pmb a^{m}_k,\pmb S^{m}_0)\}. \nonumber
\end{equation}
Then, we define the $I$-index of an $H^m$ along $\pmb a^{m}_k$ as
\begin{equation}\label{mcs:crit1}
  I(H^m,\pmb{S}^m_0,\pmb a^{m}_k)=
    \log\left(\frac{\displaystyle\ave_{i\in H^m} d^2_i(\pmb a^{m}_k,\pmb S^{m}_0)}{\displaystyle\ave_{i\in H^{m}_k} d^2_i(\pmb a^{m}_k,\pmb S^{m}_0)}\right),
\end{equation}%%%
with the convention that $\log(0/0):=0$. The measure 
$I(H^m,\pmb{S}^m_0,\pmb a^{m}_k)$ is always positive
 and increases the fewer members $H^m$ shares with $H^{m}_k$ along the direction $\pmb a^{m}_k$. This is because, 
 for a given direction $\pmb a^{m}_k$, the members of 
 $H^{m}_k$ not in $H^m$ will decrease the denominator 
 in Equation~\eqref{mcs:crit1} without affecting
the numerator, increasing the overall ratio. 
 As in the growing steps, we 
 remove the dependence of Equation
 \eqref{mcs:crit1} on the directions 
 $\pmb a^{m}_k$ by
considering the average over $K$ directions:
\begin{equation}\label{mcs:crit2}
      I(H^m,\pmb{S}^m_0)=\displaystyle\ave_{k=1}^{K} I(H^m,\pmb{S}^m_0,\pmb a^{m}_k)\;.
\end{equation}
Finally, FastHCS selects as $H^I$ the candidate 
$h$-subset $H^m$ with lowest $I$-index.

Given $H^{I}$, we denote the PCA parameters 
 corresponding to $H^{I}$ as  
$(\pmb t^{I}, \pmb L_q^{I},\pmb P_q^{I})$ and obtain 
them as follows:
\begin{eqnarray}
\displaystyle\SVD_{i\in H^{I}}\left(\slfrac{(\pmb{y}_i-\pmb t^{I})}{\sqrt{h-1}}\right)=\pmb{U}^{I}(\pmb L^{I})^{1/2}(\pmb{P}^{I})^\top,\nonumber
\label{eq:rawFit}
\end{eqnarray}
where $\pmb t^{I}=\ave_{i \in H^{I}}\pmb{y}_i$.
FastHCS computes these parameters on the full space of the data set, $\pmb Y$, rather than on the space of $\pmb S_0^I$, to increase their accuracy.
Algorithm 2 summarizes the I-index step of FastHCS. 

\vskip0.15cm
\hrule
\vskip0.15cm
\noindent \begin{equation}
\text{Algorithm 2: IStep$(\pmb X,q)$} \nonumber
\end{equation}
\hrule
\begin{tabbing}

\indent \=for \= $m=1$ to $M$ do:\\ 
\indent \>\> 
\begin{math}
  H^m_0\gets\{\text{random } $(q+1)$-\text{subset from } 1:n\}    
\end{math} \\
\indent \>\> 
\begin{math}
H^m \gets \text{growingStep}(H^m_0,\pmb X,q)
\end{math} \\
\indent  \>\>
\begin{math}
  I(H^m,\pmb S^m_0) \gets \displaystyle\ave_{k=1}^{K} I(H^m,\pmb S^m_0,\pmb a_{k}^m)
\end{math}
\\
\>end for\\
\indent \> $H^{I}\gets\displaystyle\underset{H^1,\ldots,H^M}{\argmin}\;I(H^m,\pmb S^m_0)$ \\
 \indent \> return $(\;\pmb t^{I},\; \pmb L^{I}_{q},\pmb P^{I}_{q})$
\end{tabbing}
\hrule
\vskip0.3cm

\begin{remark}
Through experiments, we find that increasing $K$ above 25 or $W$ above 5 does not noticeably improve the performance of the algorithm (though it increases its computational cost), so we set these parameters to those values.
Those experiments were carried on the outlier configurations discussed in Sections 3 and 4 as well as additional configurations enabled by the simulation suite provided with the Online Resources (Section 4). Because such experiments cannot cover all possible configurations of outliers,  
we focused on those configurations singled out as most challenging in the literature on robust PCA.  
\end{remark}

\begin{remark}
\textbf{Exact fit:} When the $h$ members 
of a subset $H'$ lie on a subspace $\pmb \Pi_r\in \mathbb{R}^r$ with $1<r\leqslant q$ the 
numerator and denominator of $I(H',\pmb{S}'_0,\pmb a'_k)$ will be the same for any direction $\pmb a'_k$ through members of $H'$~\citep{SV14} so that $I(H',\pmb{S}'_0)=0$. Then, $H^I=H'$ and $\pmb P^{I}_r$ will correspond with $\pmb \Pi_r$. 
 In such situations, FastHCS will return the index of the members of $\{i:((\pmb x_i-\pmb t^{I})\pmb P_r^*)^2=0\}$. This behavior is called exact fit~\citep{mcs:MMY06}.
\end{remark}

\begin{remark}
\textbf{Breakdown point:}
The (finite sample) breakdown point of an estimator referred to in Criterion (5) is the smallest proportion of observations
that need to be replaced by arbitrary value to drive the estimates to arbitrary values~\citep{hcs:D82}.
Naturally, a higher breakdown point is better, and the maximal breakdown point achievable 
in the PCA context is essentially fifty percent.

Both the growing step and the I-index use distances 
computed on subspaces to derive a measure of outlyingness. Since they are 
restricted to this view of the data, they are vulnerable to outliers that 
appear consistent with the majority on a subspace, but are outlying with 
respect to it (Appendix A details the specific configurations of outliers 
giving rise to this issue). Therefore, fits based on $H^{I}$ alone will 
not have maximum breakdown and the procedure presented above must be 
 combined with a second, computational expedient, ancillary procedure 
 to ensure that the final FastHCS estimates do. 
\end{remark}

\subsection{The Projection Pursuit $h$-subset}
\label{sec:PP}
Although experiments, such as those in Sections~\ref{hcs:s3} and~\ref{hcs:s4}, show that the $I$-index rarely selects contaminated subsets, it is vulnerable to specific configurations of outliers (see Appendix A).
To guard against these, FastHCS uses a robust projection-pursuit (PP) approach to identify a second subset of observations, $H^{PP}$. The PP approach proceeds by assigning an outlyingness score to each observation:
\begin{equation}
d^{PP}_i(\pmb Y) = \underset{\pmb v \in B}{\max} \frac{\left | \pmb y_i \pmb v - \med(\pmb y_j \pmb v)\right |}{\MAD(\pmb y_j \pmb v)}
\end{equation}
where $B$ contains 1000 directions $\pmb v$, each given by two data points drawn randomly from the sample, $\med(\pmb y_j \pmb v)$ is the median of $\{\pmb y_j \pmb v,  j = 1,\dots, n\}$ and $\MAD(\pmb y_j \pmb v) = \med|\pmb y_j \pmb v - \med(\pmb y_l \pmb v)|$. The PP method is orthogonaly invariant and computationally expedient. A version of the PP algorithm
 is used as an initial step in ROBPCA~\citep{hcs:HRV05,hcs:DH09}, a popular robust PCA algorithm.

\subsection{Selecting the final PCA model}
\label{sec:selectionCrit}
Consider the subset $H^\bullet:= H^{I} \cap H^{PP}$. 
Because $h\geqslant[n/2]+1$, it holds that
$|H^\bullet|\geqslant q$ and $H^\bullet$ is free of outliers whenever either one of $H^{I}$ or $H^{PP}$ is. 
We propose to exploit this fact to select between the I-index and PP-based models. Denote $H^-=H^{PP} \setminus H^{I}$ and 
\begin{eqnarray}
D(\pmb Y,H^I,H^{PP})=\ave_{j=1}^q\log\frac{\ave_{i \in H^I}((\pmb y_i-\pmb t^{I})\pmb P^{I}_j)^2}{\var_{i \in H^\bullet}(\pmb y_i\pmb P^{I}_j)}\nonumber\\
-\max_{j=1}^q\log\frac{\ave_{i \in H^\bullet}((\pmb y_i-\pmb t^{PP})\pmb P^{PP}_j)^2}{\var_{i \in H^-}(\pmb y_i\pmb P^{PP}_j)},
\label{eq:criterion}
\end{eqnarray}
with the assumption that $\log(0/0)=0$. When $D(\pmb Y,H^I,H^{PP})>0$ (or if $\displaystyle\max_{j=1}^q\var_{i \in H^{-}}(\pmb y^\varepsilon_i\pmb P^{PP}_j)=0$) the final FastHCS parameters $(\pmb t^{*},\pmb L_q^{*},\pmb P_q^{*})$ will be equal to $(\pmb t^{PP},\pmb L_q^{PP},\pmb P_q^{PP})$ and the final FastHCS subset $H^*$ is set as $H^{PP}$.
 As we show in Appendix~\ref{app:AB}, this selection rule ensures that the FastHCS fit has a high breakdown point. Our approach is similar to the ROBPCA algorithm which also selects from among two candidate subsets in the final stage of the algorithm. ROBPCA selects the subset whose eigenvalues have the smallest product. However, depending on the configuration of the outliers and the rate of contamination, it is possible for a contaminated subset to have smaller eigenvalues than an uncontaminated one~\citep{SV14}, and so to end up being selected by the criterion used in ROBPCA. In contrast, the selection criterion we propose controls (through the denominators in Equation~\eqref{eq:criterion}) for the relative scatter of the two subsets and so it is not biased towards subsets containing many concentrated outliers. Naturally, criterion \eqref{eq:criterion} is designed to favor the $I$-index based model whenever doing so does not cause breakdown.

\subsection{Outlyingness to the PCA model}
Two concepts of distance are used to assess the outlyingness of an observation with respect to a PCA model, and cut-off values for both of these can be used to classify outliers~\citep{hcs:HRV05}. 
The first is the orthogonal distance ($\od$) of the observation to the PCA model space:
\begin{eqnarray}\label{hcs:odist}
{\od}_i(\pmb t,\pmb P_q)=||\pmb x_i-\pmb t-(\pmb x_i-\pmb t)\pmb P_q(\pmb P_q)^\top||
\end{eqnarray}
Assuming multivariate normality of the observations on which the PCA model is fitted, a cut-off can be obtained for the $\od$ statistics using the Wilson-Hilferty transformation of the 
$\od$s into approximately normally 
distributed random variables: 
\begin{eqnarray}\label{hcs:cut-off}
c_e(\pmb t, \pmb P_q, H)&=&\left(\ave_{i\in H}{\od}^{2/3}_i(\pmb t,\pmb P_q)\right.\nonumber\\
&&\left.+\Phi^{-1}_{0.975}\sqrt{\frac{\var_{i\in H} {\od}^{2/3}_i(\pmb t,\pmb P_q)}{\chi^2_{e/n,1}}}\right)^{3/2}
\end{eqnarray}
where $\chi^2_{e/n,1}$ is the $e/n$ quantile 
of the $\chi^2$ distribution with one degree of freedom, and $H$ indexes a subset of observations.
The second measure of outlyingness is the score distance ($\sd$) of the observation on the PCA model space: 
\begin{eqnarray}\label{hcs:sd}
{\sd}_i(\pmb t,\pmb L_q,\pmb P_q)&=&\sqrt{\left((\pmb x_i-\pmb t)\pmb P_q\right)\pmb (\pmb L_q)^{-1}}\nonumber\\
&&\;\;\;\;\;\;\;\;\;\;\;\;\;\;\;\;\;\;\;\;\overline{\left((\pmb x_i-\pmb t)\pmb P_q\right)^\top}. 
\end{eqnarray}
 A 97.5\% cut-off for the $\sd$ statistics can be obtained using a $\sqrt{\chi^2_{0.975,q}}$ distribution.
 
 In inferential applications, PCA theory typically assumes multivariate normality, though ellipticity is sufficient for many of the hypotheses of interest to PCA-based inference, see~\citep{hcs:J86} and~\citep[pages 49,55,394]{Jolliffe2002}. In any case, robust PCA performs inference with a model fitted on the non-outlying observations, so the distributional assumption pertains to only this subset of the data. Conversely, no assumptions are made on the distribution(s) of the outliers.
 
\subsection{Computational considerations}
The computational complexity of FastHCS is determined by the $I$-index and PP subset selection components. The complexity of the PP-based approach is $\mathcal{O}(qnp)$. This is dominated by the time complexity of the $I$-index, which scales as $\mathcal{O}(q^3+nq)$ for each starting $(q+1)$-subset. 
Except when $q$ and $n$ are small (smaller than about 5 and 2000 in our tests) FastHCS is not
the quickest of the robust PCA methods we considered (in general, we find that PcaL is).
The 'Fast' qualification in this context ("FastHCS") is used to distinguish
 the algorithm based on random sub-sampling from the na\"{i}ve one based
 on exhaustive enumeration of all possible starting points, the latter being usually not computable in practice. 
The computing time of FastHCS grows similarly in $n$ to other methods we discuss in this paper, while it is the most sensitive to increases in $q$, with computation times being comparable until around $q=12$. For higher $q$, FastHCS is the slowest overall.
In practice, FastHCS becomes impractical for values of $q$ much larger than 25.  
Nevertheless, the overall time complexity of
FastHCS grows with $q$, instead of $p$,
 making it a suitable candidate for  
high-dimensional applications, and satisfying Criterion (3) for a robust PCA method.
Moreover, FastHCS belongs to the class of 
so called `embarrassingly parallel' algorithms, i.e. 
 its time complexity scales as the inverse
 of the number of processors, meaning 
 it is well suited to benefit
 from modern computing environments. 
 To enhance user experience, we implemented FastHCS in 
\cpp $\,$ and wrapped it in an portable, open source
\texttt{R} package~\citep{rcore} distributed through
\texttt{CRAN} (package \texttt{FastHCS})

\section{Simulation Study}\label{hcs:s3}

In this section we evaluate the behavior of FastHCS against three other robust PCA methods: the ROBPCA \citep{hcs:HRV05}, PcaPP~\citep{hcs:cr05} 
and PcaL~\citep{hcs:L99} algorithms. Although other methods for high-dimensional outlier detection exist, these are particularly comparable with FastHCS: all three are PCA algorithms, satisfying Criteria (1)-(3) of a robust PCA method.
ROBPCA, PcaPP and PcaL were 
computed using the \texttt{R}~\citep{rcore} package 
\texttt{rrcov}~\citep{mcs:TF09} with default settings 
except for the robustness parameter 
\texttt{alpha} for ROBPCA which we 
set to 0.5, the value yielding maximum 
robustness and the value of $\texttt{k}$ 
 which we set to $q$ for all the algorithms.
   Our evaluation criterion is the empirical
bias, a quantitative measure of robustness of
  a fit. In Appendix C we explain the motivation
 for this choice (in the Online Resources 
 we also report the results obtained using 
 alternative evaluation criteria).
  
\subsection{Empirical bias}
Given an elliptical distribution 
$\mathscr{E}_p$ with location vector $\pmb \mu^u$ and covariance matrix $\pmb \varSigma^u$ (the superscript $u$
stands for uncontaminated)
 and an arbitrary distribution   
$\mathscr{F}^c$ (the superscript $c$
stands for contaminated), consider 
the $\varepsilon$-contaminated model
\begin{eqnarray}
\mathscr{F}^{\varepsilon}=(1-\varepsilon)\mathscr{E}_p(\pmb\mu^u,\pmb\varSigma^u)+\varepsilon\mathscr{F}^c.\nonumber
\end{eqnarray}
For a fixed $q<p$ denote $\pmb\varSigma^u_q$ the rank $q$ 
approximation of $\pmb\varSigma^u$ and $\pmb V_{q}^{}=\pmb P_q^{}\pmb L_q^{}\pmb P_q^\top$ an estimator of $\pmb\varSigma^u_q$.
The (empirical) bias measures the difference between $\pmb V_{q}$ and $\pmb \varSigma^u_q$. For this, we will consider 
more specifically the shape component of this difference which is called the shape bias. 
Given these two (rank reduced) covariance matrices, recall that the corresponding shape matrices are defined by $\pmb \Gamma^u=|\pmb\varSigma^u|^{-1/q}\pmb\varSigma^u_q$ and $\pmb G_q=|\pmb V_q|^{-1/q}\pmb V_q$.
For an estimator of $\pmb V_q$, all the information about the shape bias is 
contained in the matrix $(\pmb \Gamma^u)^{-1/2}\pmb G_{q}(\pmb \Gamma^u)^{-1/2}$, or equivalently its condition 
number~\citep{mcs:YM90}:
\begin{eqnarray*}
\mbox{bias}(\pmb V_{q})=
\log\frac{\lambda_1\left((\pmb \Gamma^u)^{-1/2}\pmb G_{q}(\pmb \Gamma^u)^{-1/2}\right)}{
\lambda_q\left((\pmb \Gamma^u)^{-1/2}\pmb G_{q}(\pmb \Gamma^u)^{-1/2}\right)}\;, \nonumber
\end{eqnarray*}
where $\lambda_1$ ($\lambda_q$) is the largest 
($q^{th}$) eigenvalue of the positive-semidefinite matrix $(\pmb \Gamma^u)^{-1/2}\pmb G_{q}(\pmb \Gamma^u)^{-1/2}$.
 Evaluating the maximum bias of $\pmb V_{q}$ is an empirical
 matter: for a given sample, it 
  depends on the dimensionality of the data, 
  the rate of contamination by outliers, the
  distance separating them from the uncontaminated  
 observations, and the spatial 
configuration of the outliers ($\mathscr{F}^c$). 
However, because all the algorithms we compare are rotation and 
shift equivariant, w.l.o.g. we can focus on configurations 
where $\pmb\varSigma^u$ is diagonal, and $\pmb \mu^u=\pmb 0_p$ (a $p$-vector of zeros),
 greatly reducing the number of scenarios we need to consider.
Since the effect of contamination
 is presumably most harmful when the outlier belongs
to the subspace spanned by $\pmb \Pi_q^{u\perp}$ (the orthogonal 
complement of $\pmb \Pi^u_q$) we, concentrate on
the class of outlier configurations satisfying these conditions~\citep{hcs:M05}.
In the simulation 
 results shown in Section~\ref{mcs:s5}, the outliers belong 
  to the subspace spanned by the eigenvector corresponding to
  the $(q+1)$-th eigenvalue of $\pmb\varSigma^u$ (as in \cite{hcs:HRV05}) whereas the simulation 
 settings shown in the Online Resources the outliers belong to the subspace spanned by all the components of $\pmb \Pi_q^{u\perp}$
  (as is done in~\cite{hcs:M05}).

\subsection{Outlier configurations}

To quantify the robustness of the four algorithms, 
we generate many contaminated data sets
 $\pmb X^\varepsilon$ of size $n$ with 
$\pmb X^{\varepsilon}=\pmb X^u\cup\pmb X^c$ where 
$\pmb X^u$ and $\pmb X^c$ are 
the uncontaminated and contaminated part of 
the sample. 
In all simulations, the center of the uncontaminated data $\pmb \mu^u= \pmb 0_{1 \times p}$ its $\pmb\varSigma^u$ is either $\pmb\varSigma^u$ or $10^{-4}\pmb\varSigma^u$. We show results where $p\in\{100,400\}$, $q\in\{5,10,15\}$, $n=200$, and $\varepsilon$ is one of $\{0.1,0.2,0.3,0.4\}$
%Already said above now:::To parameterize $\pmb X^u$ we will set $\pmb \mu^u = \pmb 0_p$. 

To facilitate comparison, we consider a generalization to arbitrary values of $q$ of the parametrization of $\pmb \varSigma^u$ used in~\citep{hcs:HRV05}.
% \item $\pmb L_M$: the values of the diagonal elements slowly decrease from the first to the $q^{th}$, and then drop sharply. More precisely, the values of the diagonal elements of 
% $\pmb\varSigma^u$ are $20(1+(1-j+q)/2),\;j=1,\dots,q$ and $p^{-1}(p-j+1)+1,\;j=q+1,\dots,p$. This is a generalization to arbitrary values of $q$ of the parametrization used in~\citep{hcs:M05}.
To define this matrix, $\pmb L$, we set the values of the first $q$ elements of the diagonal of $\pmb\varSigma^u$ so that they decrease exponentially and do not drop abruptly before the remaining, smaller, entries.
More precisely, the first $q$ entries of the diagonal of  $\pmb \varSigma^u$ are the first $q$ Fibonacci numbers and the entries $q+1,\dots,p$ are linearly decreasing as ($0.1, \dots, 0.001$).  
In Section 2 of the Online Resources, we also provide results using a covariance matrix proprosed by~\citep{hcs:M05}. 

Our measure of robustness, the bias, 
depends on the distance between the 
outliers and the non outlying observations which 
we will measure by
\begin{equation}\label{mcs:nu}
\nu = \min_{i\in I^c} \sqrt{(\pmb{x}_i^\top(\pmb\varSigma^u)^{-1}\pmb{x}_i^{})/\chi^2_{0.975,p}}, \,\,
\end{equation}
 where $I^c$ is an indicator for the observations coming from $\pmb X^c$. (A more detailed description of how we set the location of the outliers can be found in Section 3 of the Online Resources.)
 In the simulations, the distance $\nu$ separating the outliers from the good data is one of $\{1,\ldots,10\}$

We consider two outlier configurations frequently used in the robust PCA literature~\citep{hcs:HRV05, hcs:M05}:
% because they are expected to be challenging
\begin{inparaenum}[(a)]
 \item Shift outliers: $\pmb\varSigma^c=\pmb\varSigma^u$ and $\pmb\mu^c$ chosen to satisfy Equation~\eqref{mcs:nu};
\item Point-mass outliers: all the outliers are concentrated around a single point at a distance $\nu$ from $\pmb X^u$. To obtain this effect, we set $\pmb\varSigma^c = 10^{-4}\pmb\varSigma^u$.
\end{inparaenum}
Both of these outlier configurations are relevant in practical applications where they are, for example, similar to certain types of sensor faults and contamination scenarios.

For FastHCS, the number of initial $(q+1)$-subsets $M$ is given as in Equation~\eqref{hcs:Mval}, with $e/n=0.6$. The \texttt{rrcov} implementations for ROBPCA and PcaPP include hardcoded values for the number of starting subsets presumed by their authors to be sufficient for these methods. PcaL does not require starting subsets.
Section 4 of the Online Resources explains how the reader can use code we supply to replicate all results from this section.

In Figures \ref{hcs:sim1} to \ref{hcs:sim2}, we display
 the bias curves as lattice plots \citep{mcs:D08} for 
 discrete combinations of $p$, $q$ and $\varepsilon$.  
 In all cases, we expect the outlier
 detection problem to become monotonically harder as
 we increase $q$ and $\varepsilon$, so little 
information will be lost by considering a discrete 
grid of a few values for these parameters.
The configurations also depend on the distance separating
 the data from the outliers. 
 Here, the effects of $\nu$ on the bias are harder to foresee: clearly 
  nearby outliers will be harder to detect but misclassifying   
distant outliers will increase the bias more. Therefore, we 
 will test the algorithms for many values (and chart 
the results as a function) of $\nu$. For each algorithm, a solid colored
 line will depict the median, and a dotted line (of the same 
color) the 75th percentile of $\mbox{bias}(\pmb V_q)$. 
Each figure is based on 12000 simulations. 

\subsection{Simulation results}\label{mcs:s5}

Figure~\ref{hcs:sim1} displays the bias curves corresponding to the fits found by the algorithms for $p=100$ for the shift (right) and point-mass (left) configurations. Regardless of the spatial configuration of the outliers or the value of $\varepsilon$, the fits found by PcaPP and PcaL generally have high values of $\mbox{bias}(\pmb V_q)$. As it turns out, PcaPP and PcaL will show poor performance on all of the remaining simulations as well. Since this poor performance is consistent, we will not discuss it in detail. The performance of ROBPCA is substantially better than the previous two algorithms, but it becomes increasingly unreliable as $q$ increases. FastHCS shows almost no bias. Furthermore, we see that in some cases even after the bias curves of ROBPCA have re-descended, they are still above those of FastHCS. Given that this gap increases with $\varepsilon$, we infer that the eigenvalue estimation of ROBPCA is still influenced by the outliers, even when the eigenvectors are correctly estimated. FastHCS estimates both correctly.

\begin{figure}[ht!]
\centering
\includegraphics[width=0.95\textwidth]{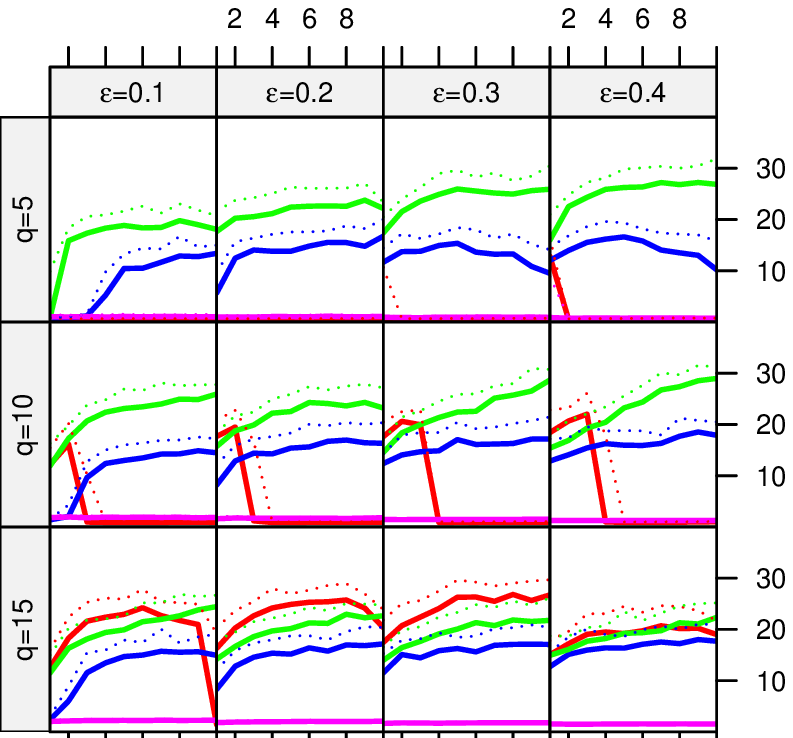}
\includegraphics[width=0.95\textwidth]{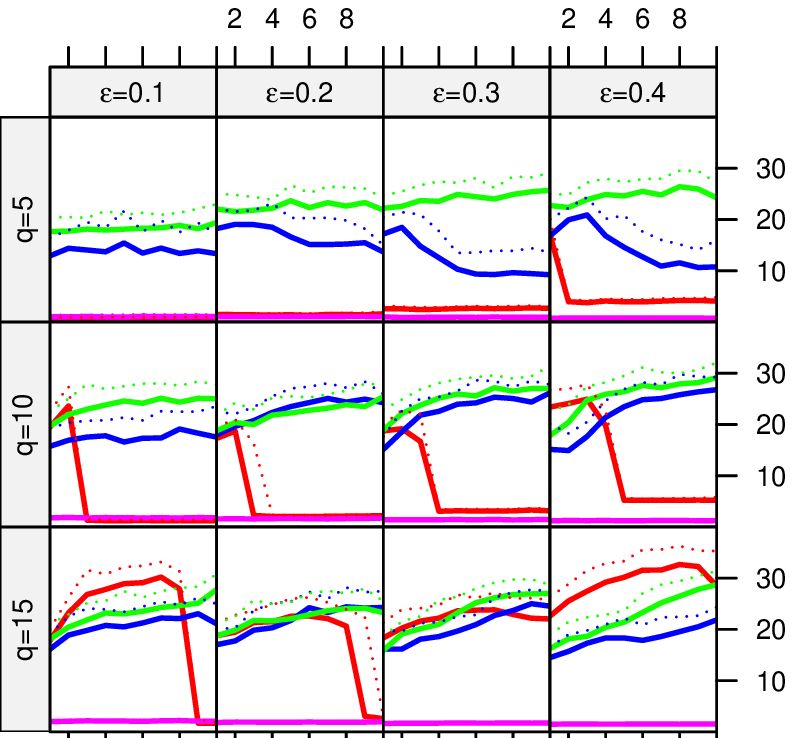}
\caption{$\mbox{bias}(\pmb V_q)$ for $p=100$, shift (top) and point-mass (bottom) as a function of $\nu$.
         \textcolor{ROBPCA}{\underline{ROBPCA}},
         \textcolor{PcaPP}{\underline{PcaPP}}, 
         \textcolor{PcaL}{\underline{PcaL}},
         \textcolor{HCS}{\underline{FastHCS}}.}
\label{hcs:sim1}
  \vspace{-1.0em}
\end{figure}
We next consider the high dimensional case of $p>n$. Figure~\ref{hcs:sim2}. Across all scenarios, the results are comparable to those in seen in Figure~\ref{hcs:sim1}.
FastHCS is the best performing method, being unaffected by the outliers, while the other methods show high biases on most settings.

\begin{figure}[ht!]
\centering
\includegraphics[width=0.95\textwidth]{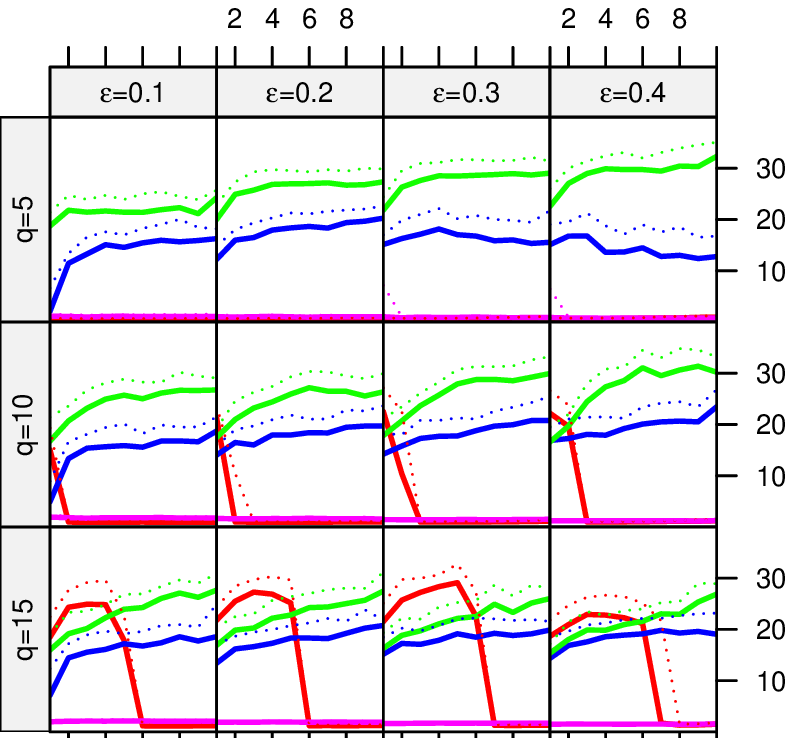}
\includegraphics[width=0.95\textwidth]{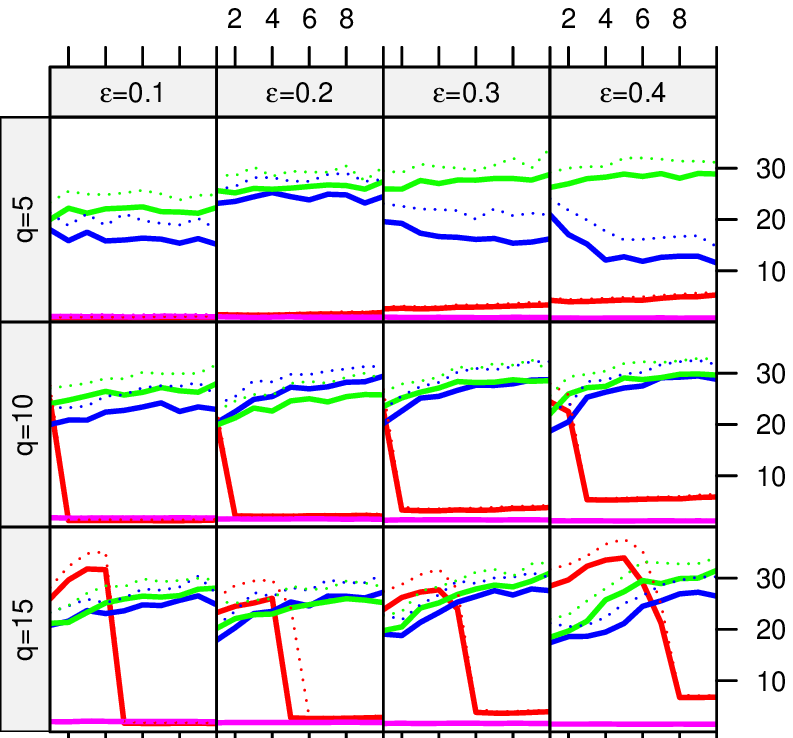}
\caption{$\mbox{bias}(\pmb V_q)$ for $p=400$, shift (top) and point-mass (bottom) as a function of $\nu$.
         \textcolor{ROBPCA}{\underline{ROBPCA}},
         \textcolor{PcaPP}{\underline{PcaPP}}, 
         \textcolor{PcaL}{\underline{PcaL}},
         \textcolor{HCS}{\underline{FastHCS}}.}
\label{hcs:sim2}
  \vspace{-1.0em}
\end{figure}
Over all of the scenarios, FastHCS shows almost no bias, despite challenging outlier configurations. Furthermore, the bias curves corresponding to the fits found by FastHCS are also less variable: throughout, the 75th percentile of the bias corresponding to the FastHCS fit is typically closer to the median bias than is the case for the other algorithms. These findings indicate that FastHCS meets Criteria (4)-(5) for a robust PCA method. In contrast, we find that the performance of the other methods vary with the configuration of the outliers, the rate of the contamination, and the dimensionality of the $q$-subspace. In Section 5 of the Online Resources, we re-analyze these simulation results, giving similar plots for a measure of outlier misclassification, as well as the principal angle measure and $\mbox{bias}(\pmb P_q)$, two measures of quality of fit focusing on the loadings.

An extended simulation study shows that the results we present above are robust the choice of different simulation settings (for example, those used in~\citep{hcs:M05}),
and different bias measurments criterions. However, since the outcome of the extended study is nearly identical to the one we present in this section, we have relegated these results in the Online Resources.

\section{Real data applications}\label{hcs:s4}

We next apply the algorithms to three real data examples. 
We selected these examples because in each the 
observations in the data can be separated into two subgroups from which we construct a majority and an outlier group.
They are taken from three fields that regularly use PCA: character recognition, chemometrics
and genetics. A feature shared by all of these data sets is that the variables within each are measured on the same scale. Data sets with this property were selected to remove the ancillary problem of data standardization. If the variables are not on the same scale, it is common practice in PCA modelling to standardize the data, but the choice of how to do so robustly adds a layer of subjectivity to the results. For the interested reader, the data sets used 
in this section are included in the \texttt{FastHCS} package. Section 6 of the Online Resources explains how the reader can use code we provide to replicate all results in this section.

The implementations of ROBPCA and PcaPP we use do not have an option to set the seed, but to ensure
reproducibility of the results for FastHCS, we set 
\texttt{seed=1}. PcaL is a deterministic algorithm and uses no seeds.
As in the simulations, we run each algorithm
 with default settings, except for the \texttt{alpha} parameter in ROBPCA which we set to 0.5.
 To illustrate the outlier detection capabilities of the algorithms, we display diagnostics plots. These show the $\od$ and $\sd$ values for each observation, divided by the cut-off values in Equations \eqref{hcs:cut-off} and~\eqref{hcs:sd} to put each of the methods on a comparable scale. 

\subsection{Selecting the number of components}
\label{sec:selectq}
We recommend using as large a value of $q$ as possible for FastHCS, since this improves its outlier detection performance. However, to avoid the curse of dimensionality, it is also advised to set $q<n/5$~\citep{hcs:HRV05}.
In all the examples that follow, we select a relatively high number of components, $q=15$, to strike a balance between computation time and accuracy. Once the outliers have been detected, components with large eigenvales can be analyzed and used to construct a PCA model of the good data.
One may also wish to use a selection criterion, such as the scree chart or contribution to variance. In Section 7 of the Online Resources, we also show results using the latter of these approaches in an extended analysis. In that analysis, the chemometrics data set illustrates how robust PCA methods parametrized based on a parsimonious, eigenvalue-based criterion, may miss outliers on minor components, even when the majority of the data may be modelled using a parsimonious model.

\begin{figure*}
\centering
\floatbox[{\capbeside\thisfloatsetup{capbesideposition={right,top},capbesidewidth=4cm}}]{figure}[\FBwidth]
{\caption{The 350 vectors of Fourier coefficients 
of the character shapes for the Multiple Features data set. The first 150 curves 
(corresponding to observations with labels '0') are show in 
the top panel in light orange. The Main group (200 curves) corresponding to observations with labels '1' are shown in the bottom panel.}\label{hcs:rd1}}
{\includegraphics[width=0.65\textwidth]{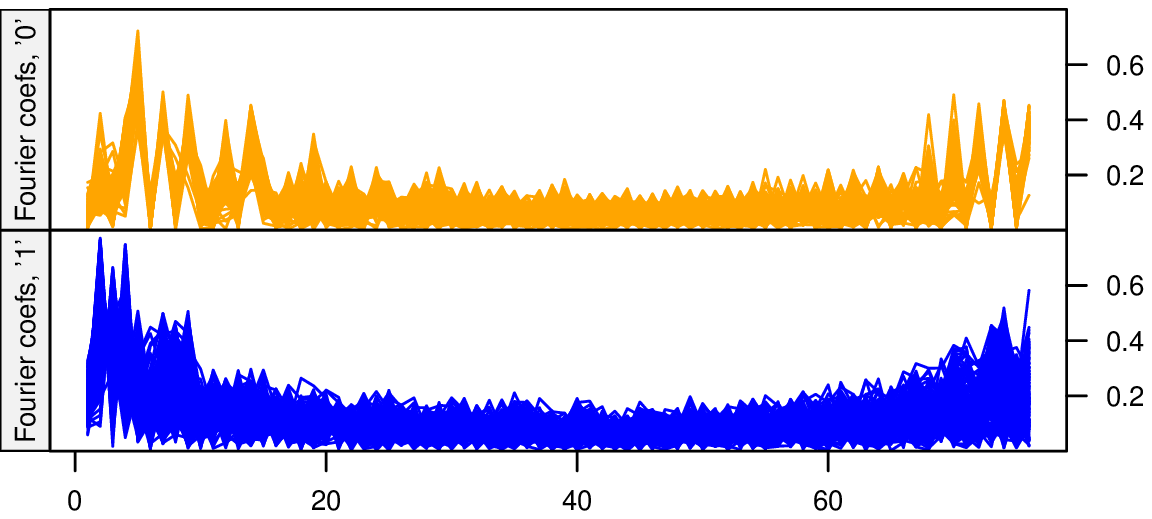}}
\vspace{-1.0em}
\end{figure*} 

\begin{figure*}
\centering
\floatbox[{\capbeside\thisfloatsetup{capbesideposition={right,top},capbesidewidth=4cm}}]{figure}[\FBwidth]
{\caption{Diagnostic plots of the scaled score and orthogonal distances of the Fourier coefficients of the numerals for the robust PCA fits corresponding to the four algortihms for the Multiple feature data set. The observations corresponding to numerals with labels '1' ('0') are shown as dark blue circles (light orange triangles).}\label{hcs:rd2}}
{\includegraphics[width=0.65\textwidth]{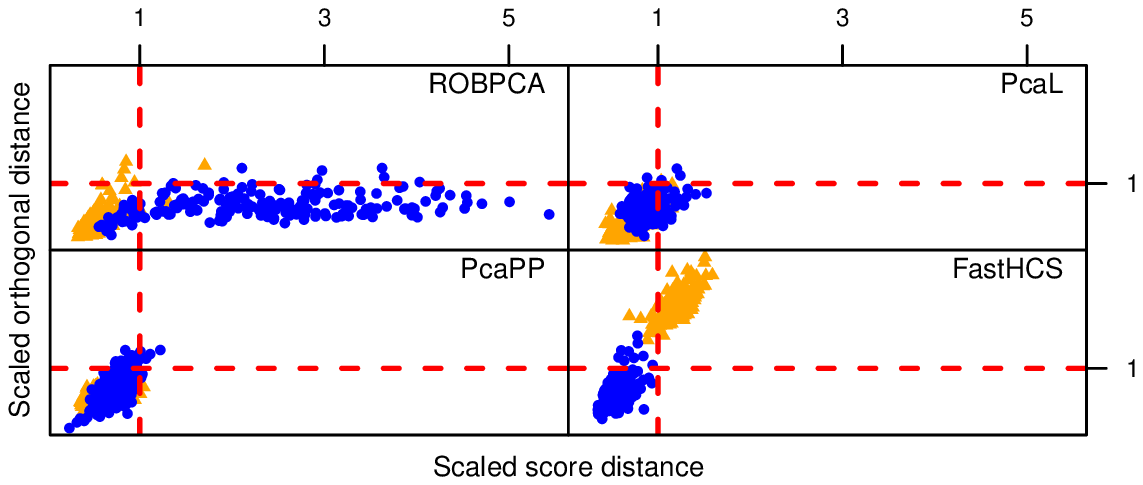}}
\vspace{-1.0em}
\end{figure*}

\begin{figure*}
\centering
\floatbox[{\capbeside\thisfloatsetup{capbesideposition={right,top},capbesidewidth=4cm}}]{figure}[\FBwidth]
{\caption{Spectra of fifty 250mg (light orange) tablets and eighty 80mg (dark blue) tablets for the Tablet data set.}\label{hcs:tablet1}}
{\includegraphics[width=0.65\textwidth]{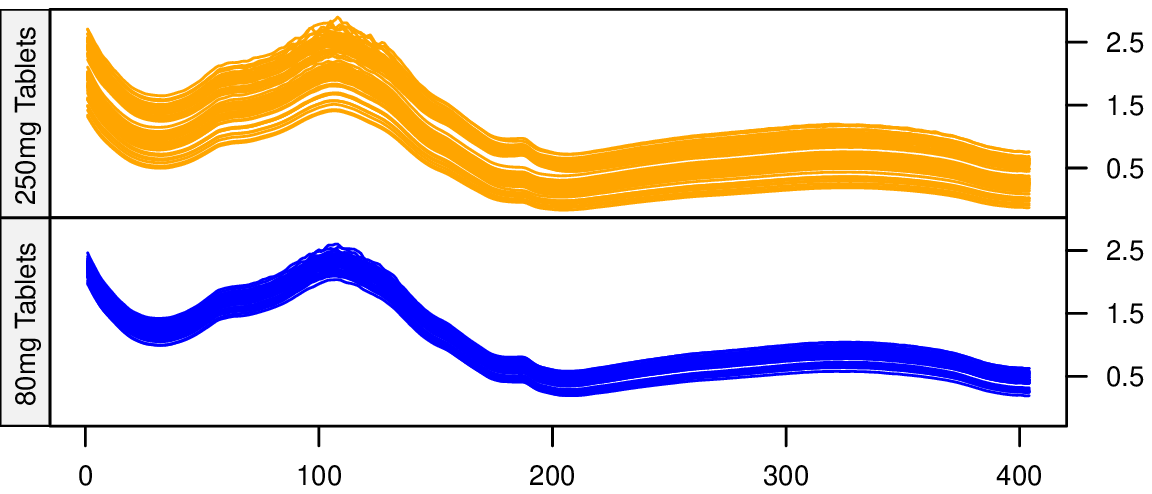}}
\vspace{-1.0em}
\end{figure*} 

\begin{figure*}
\centering
\floatbox[{\capbeside\thisfloatsetup{capbesideposition={right,top},capbesidewidth=4cm}}]{figure}[\FBwidth]
{\caption{Diagnostic plots of the scaled score and orthogonal distances of the measured spectra corresponding to the four robust PCA fits for the Tablet data set. The observations corresponding to 80mg (250mg) tablets are shown as dark blue circles (light orange triangles).}\label{hcs:tablet2}}
{\includegraphics[width=0.3\textwidth,angle=-90]{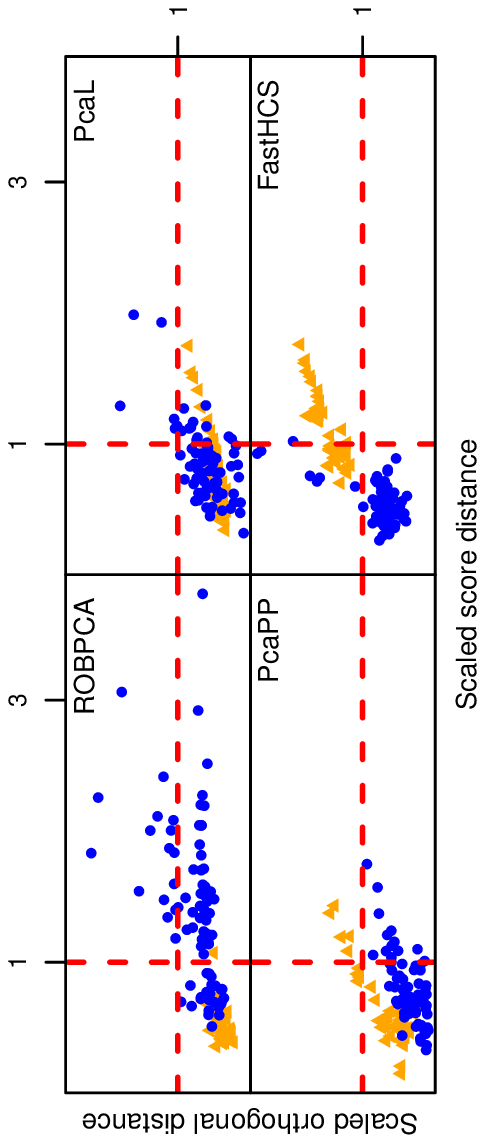}}
\vspace{-1.0em}
\end{figure*} 

\begin{figure*}
\centering
\floatbox[{\capbeside\thisfloatsetup{capbesideposition={right,top},capbesidewidth=4cm}}]{figure}[\FBwidth]
{\caption{The 198 vectors of cytosine methylation 
 $\beta$ values for the DNA alteration data set. The first 85 curves 
(corresponding to observations taken from blood tissues)
are show in the top panel in light orange. The main group (113 curves) 
corresponding to observations taken from non-blood, tissues are shown in the bottom panel in dark blue.}\label{hcs:rd4}}
{\includegraphics[width=0.65\textwidth]{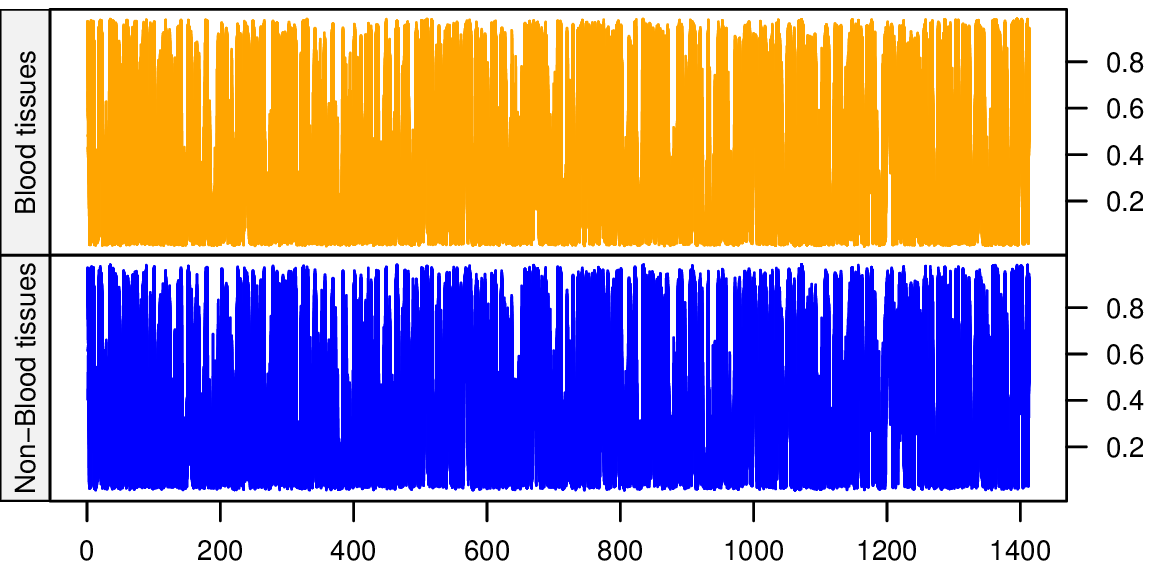}}
\vspace{-1.0em}
\end{figure*} 

\begin{figure*}
\centering
\floatbox[{\capbeside\thisfloatsetup{capbesideposition={right,top},capbesidewidth=4cm}}]{figure}[\FBwidth]
{\caption{Diagnostic plots of the scaled score and orthogonal 
distances of cytosine methylation $\beta$ values corresponding to the four robust PCA algorithms for the DNA alteration data set. The observations with labels
"non-blood" ("blood") are shown as dark blue circles (light orange triangles).}\label{hcs:geneticDiagnostic}}
{\includegraphics[width=0.3\textwidth,angle=-90]{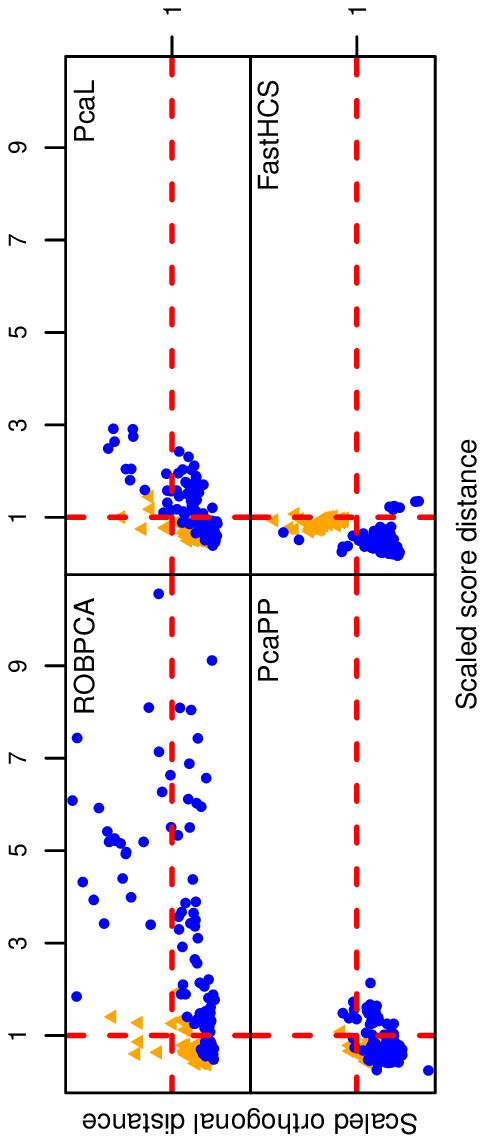}}
\vspace{-1.0em}
\end{figure*} 

\subsection{The Multiple Features data set}
\label{sec:multiplefeatures}

The Multiple Features data set~\citep{hcs:VBal} contains many replications of hand written 
  numerals ('0'-'9') extracted from nine
 original maps of a Dutch public utility. 
 For each numeral, we have 200 replications  
 (the observations) expressed as a vector of 76 
 of Fourier coefficients (the features) describing its shape. Finally, each numeral
 has been manually identified, yielding an extra vector of class labels.  
In this application, we will combine the vectors 
of Fourier coefficients corresponding to the  200 replications of the digit '1' to 
the vector of Fourier coefficients corresponding to the first 150 replications of
the digit '0' (so that $n=350$ and $p=76$). The goal of the methods will be to distinguish the '0's and the '1's.
 
  To give an impression of the differences between the two groups, we plot the Fourier coefficients 
corresponding to the main (outlier) subgroup 
 in the bottom (top) panels of Figure~\ref{hcs:rd1} 
  as dark blue (light orange) curves. 
In general, the curves corresponding to the members of the two groups are visually similar. In particular, the vertical ranges 
of both largely overlap, and both sets of curves 
exhibit a similar pattern of variance clustering 
where the central 40 Fourier coefficients have
systematically less dispersion than higher or lower ones.

Figure~\ref{hcs:rd2} depicts the four resulting diagnostic plots. We assign to each observation a color (dark blue or light orange) and a plot symbol (round or triangle) depending on whether the corresponding curve describes a member of class '1' or '0', respectively. 
The outlier plots of PcaPP and PcaL show that neither method makes any distinction between the two digits, and observations from both groups influence the corresponding PCA models.   
ROBPCA discovers a different structure in the data, confounding the '0's as the majority group and the '1's as outliers on the model space. Since only a few '1's are $\od$  outliers, almost all of the observations influence the fitted loadings. 
In contrast to the other methods, FastHCS correctly identifies all of the '0's as outliers and identifies some '1's that might warrant additional scrutiny.

%\FloatBarrier

\subsection{The Tablet data set}
\label{sec:tablets}

The Tablet data~\citep{hcs:D02} contains the results of an analysis on Escitalopram$^{\mbox{\scriptsize\textregistered}}$ tablets from the pharmaceutical company H. Lundbeck A/S using near-infrared (NIR) spectroscopy. 
The study  includes tablets of four different dosages from pilot, laboratory and full scale production settings are included. Each tablet (the observations) is measured along 404 wavelengths (the variables). 
From this data, we extract two subsets of observations 
which we combine to obtain a new data set formed of two heterogeneous subgroups. Tablets of 80mg will make up the majority group and the rows corresponding to the first 50 tablets with a nominal weight of 250mg will serve as the outliers. This gives a high-dimensional data set (i.e. $p>n$) with $n=130$, $p=404$ and a contamination rate of $\varepsilon=38\%$. 

Figure~\ref{hcs:tablet1}, depicts the spectra of the 250mg (light orange) and 80mg (dark blue) tablets. The spectra of the 250mg tablets follow a different multivariate pattern than those of the 80mg tablets. For example, the spectra of the former are generally lower and more spread out than the spectra of the latter.~\cite{hcs:D02} explain that accurate models for NIR analyses of medical tablets are valuable for quality control purposes, since they are fast, nondestructive, noninvasive, and require little preparation. The goal of the algorithms will be to fit a model to the 80mg tablets, despite the presence in the sample of many 250mg tablets.

Figure~\ref{hcs:tablet2} depicts the diagnostic plots of the scaled outlyingness measures obtained from each of the algorithms. To enhance the 
distinction between the two groups in our data, we show the 80mg tablets as (dark) blue circles and the 250mg tablets as (light) orange triangles.
These results are similar to those we saw when we examined the Multiple Features data set. Again PcaPP and PcaL do not distinguish between the two groups and ROBPCA uses both groups to fit the loadings parameters and confuses the outliers with the majority group on the model space.
 In contrast, the diagnostic plot derived from the FastHCS fit  establishes that the 250mg tablets do not follow the same multivariate patterns as 80mg tablets and, in fact, depart significantly from it. In the plot, we see that FastHCS assigns the outliers high $\od$ values; excluding them from loadings and eigenvalue estimation. It also assigns many of them high $\sd$ values; revealing their distance on the model space.

%\FloatBarrier 
 
\subsection{DNA Alteration data set}
\label{sec:dna}

In our final case study, we examine another high-dimensional data set; the DNA  Alteration data set~\citep{hcs:christ09}. This data set
consists of cytosine methylation $\beta$ values collected at 
1413 autosomal CpG loci (the variables) in a 
sample of 217 non-pathological human tissue specimens (the observations) taken from 10 different anatosites. 
In \cite{hcs:christ09}, the authors show that the tissue samples in this data set form three well separated subgroups. The first of these constitutes all 113 observations corresponding to cytosine methylation $\beta$ values measured on "non-blood, non-placenta" (henceforth, simply "non-blood") tissues.  A second subgroup of data points comprises the 85 cytosine methylation $\beta$ measurements taken on blood tissues.

In this application, we will combine the 113 measurements 
of cytosine methylation $\beta$ values corresponding
to the samples "non-blood" tissue with 85 measurements 
 taken from blood tissues (so that $n=198$ and $p=1413$).
In Figure~\ref{hcs:rd4}, we plot the 1413 $\beta$ values corresponding to each blood (light orange) and non-blood (dark blue) observation.
Visually, the curves of these two groups appear difficult to distinguish from one another. In particular, the vertical range of both overlap and the groups do not exhibit any pronounced difference in the variability of the variables.

The diagnostic plot for PcaPP (Figure~\ref{hcs:geneticDiagnostic}) reveals that the fitted model regards blood and non-blood tissue to be quantitatively similar. ROBPCA and PcaL also detect almost none of the outliers, but additionally consider a number of the non-blood observations to be $\sd$ outliers. As in the previous case studies, FastHCS correctly identifies all of the outliers. As a consequence, the parameter estimates corresponding to this model are more likely to reflect the true structure of non-blood tissue than those fitted by the other algorithms.

\section{Outlook}
In this article we introduced FastHCS to satisfy a number of criteria we expect a robust PCA method to have. In both the simulations and real data examples we performed, FastHCS met all of these criteria. In contrast, state-of-the-art methods did not, and often produced results one would expect from a non-robust method. This may seem like an extreme outcome, but it is in fact the very nature of dealing with outliers: if a method fails to identify them, the resulting model fit is often profoundly changed.

It is interesting to compare the performance of FastHCS and ROBPCA because these methods both use variants of projection pursuit. While FastHCS compares the fit produced by the $I$-index to that from the projection pursuit criterion, ROBPCA relies completely on the projection pursuit criterion to construct its initial subset. Thus, the difference in performance between FastHCS and ROBPCA that we observe in our simulations and real data examples arises from the fact that FastHCS nearly always chooses the $I$-index subset over the projection pursuit  one.

In most applications, admittedly, data settings and contamination
 patterns will not be as difficult as
 those we featured in our simulations and real data examples, and in these easier cases 
 the different methods will, hopefully, concur. 
Nevertheless, in three real data examples from fields where PCA is widely used, we were able to 
establish that real world situations can be challenging enough to push current state-of-the-art outlier detection procedures to their  
limits and beyond, justifying the  
development of better solutions.
In any case, given that in practice we do not
 know the configuration of the outliers, as
 data analysts, we prefer to carry out our inferences
 while planning for the worst contingencies. 

\section{Acknowledgements}
\noindent The authors wish to acknowledge the helpful comments from two anonymous 
referees and the editor which improved this paper. 

%%%%%%%%%%%%%%%%%%%%%%%%%%%%%%%%%%%%%%%%%%%

\FloatBarrier

\begin{appendix}
\section{Vulnerability of the $I$-index to orthogonal outliers} 
\label{sec:IindVul}
Throughout this appendix, let $\pmb Y$ be an $n\times p$ data matrix of uncontaminated observations drawn from 
a rank $q$ distribution $\mathscr{F}$, with $q$ and integer satisfying $2<q<\min(p,n)$.
 However, we do not observe $\pmb Y$ but an $n\times p$ 
  (potentially) corrupted data matrix $\pmb Y^{\varepsilon}$ 
   that consists of $g<n$ observations from $\pmb Y$ and 
    $c=n-g$ arbitrary values with $\varepsilon=c/n$ denoting 
     the (unknown) rate of contamination. Throughout,  
     $h=\lceil\slfrac{(n+q+1)}{2}\rceil$ and the PCA estimates $(\pmb t^I, \pmb L_q^I,\pmb P_q^I)$ are defined as in Section~\eqref{sfhcs} with $(\pmb L_q^I)_j,1\leqslant j\leqslant q$ will denoting the $j$-th diagonal entry of $\pmb L_q^I$. 
     
We will consider the finite sample breakdown~\citep{hcs:D82} in the context of PCA following~\citep{hcs:LC85}: 
\begin{eqnarray}
\varepsilon_1 &=& \min_{1\leqslant c\leqslant n}\{\varepsilon=\frac{c}{n}:  (\pmb L_q)_1=\infty\}\label{expfsbdp}\\
\varepsilon_2 &=& \min_{1\leqslant c\leqslant n}\{\varepsilon=\frac{c}{n}:  (\pmb L_q)_{q}=0\}\label{impfsbdp}
\end{eqnarray}
Equation~\eqref{expfsbdp} defines the so-called finite sample explosion breakdown point and Equation~\eqref{impfsbdp} the 
so-called finite sample implosion breakdown point of PCA estimates $(\pmb t, \pmb L_q,\pmb P_q)$, and the general finite
sample breakdown point is $\varepsilon^*_n = \min(\varepsilon_1, \varepsilon_2)$.

The following assumptions (as per, for example~\cite{ppcs:T94}) all pertain to the original, uncontaminated,  
data set $\pmb Y$. 
We will consider the case whereby the point cloud formed by $\pmb Y$ lies in \textit{general position} in 
$\mathbb{R}^q$. The following definition of \textit{general position} is adapted from 
\cite{mcs:RL87}:

 \bigskip

\textsc{Definition} 1: \textit{General position in $\mathbb{R}^q$}. 
$\pmb Y$ is in general position in $\mathbb{R}^q$ if 
no more than $q$-points of $\pmb Y$
 lie in any $(q-1)$-dimensional 
 affine subspace. 
For $q$-dimensional data, this
means that there are no more than $q$ points
 of $\pmb Y$ on any hyperplane, so that any $q+1$ points of
$\pmb Y$ always determine a $q$-simplex with non-zero determinant.

\bigskip
\noindent 
The $I$-index is shift invariant so that, 
w.l.o.g., we only consider cases where the
good observations are centered at the origin.
Throughout, we will also assume that the 
members of $\pmb Y$ are bounded:
\begin{eqnarray*}\label{eqbdp1}
\max_{i=1}^n||\pmb y_i||<U_0 
\end{eqnarray*}
for some bounded scalar $U_0$ depending only on the uncontaminated observations and that the uncontaminated  
observations contain no duplicates:
\begin{eqnarray*}
||\pmb y_i-\pmb y_j||>0\;\forall\;1\leqslant i<j\leqslant n. 
\end{eqnarray*}

\subsection{Theorem 1: The implosion breakdown, $\varepsilon_2(\pmb t^I, \pmb L_q^I,\pmb P_q^I)$, is $\slfrac{(n-h+1)}{n}$}

\begin{proof}
If at least $h$ rows of $\pmb Y^{\varepsilon}$ are in general position in $\mathbb{R}^q$, any subset of $h$ observations will  contain at least $q+1$ observations in general position. This guarantees that the $q^{th}$ eigenvalue corresponding to any $h$-subset is non-zero~\citep{hcs:Seber}. Thus, it follows
that $\varepsilon_2(\pmb t^I, \pmb L_q^I,\pmb P_q^I)=\slfrac{(n-h+1)}{n}$. 
\end{proof}

\subsection{Finite sample explosion
breakdown of $(\pmb t^I, \pmb L_q^I,\pmb P_q^I)$}
Denote $\pmb z\in\mathbb{R}^p$ 
the outlying entries of $\pmb Y^{\varepsilon}$ and
$\pmb z^m=||\pmb z\pmb P_0^m||$.
The only outliers 
capable of causing explosion 
breakdown must satisfy:
\begin{eqnarray}
||\pmb z||&\geqslant&U_1,\label{eqp1}\\
\min_m||\pmb z\pmb P_0^m||&\leqslant&U_2.\label{eqp2}
\end{eqnarray}
for any bounded scalar $U_1$ and $U_2$ 
depending only on the uncontaminated observations.

\begin{proof}
Suppose that the outliers do not satisfy 
Equation~\eqref{eqp1} so that $\max_{i}||\pmb y_i^{\varepsilon}||\leqslant U_1$, but that the PCA estimates $(\pmb t^I, \pmb L_q^I,\pmb P_q^I)$  break down. 
This leads to a contradiction since 
\begin{eqnarray}
(\pmb L_q^I)_1\leqslant\max_{i\in H^I}||\pmb y_i^{\varepsilon}||
\end{eqnarray}
Therefore, for a contaminated $h$-subset to cause explosion 
 breakdown, the outliers must satisfy 
  Equation~\eqref{eqp1}.

Assume that an outlier $\pmb z$ does not satisfy Condition~\eqref{eqp2}. \cite{SV14}
showed that any $h$ subset $H^m$ 
   indexing $\pmb z$ will have an 
   unbounded value of $I(H^m,\pmb S_0^m)$ if and only if 
   $\pmb z^m$ is unbounded. 
    But for the uncontaminated data, it holds that
\begin{eqnarray} 
\max_i\min_m||\pmb y_i\pmb P_0^m||&\leqslant&U_2
\end{eqnarray}  
 so if the contaminated data set $\pmb Y^{\varepsilon}$ 
contains at least $h$ entries from the original data matrix $\pmb Y$, then it is always possible to construct 
a subset $H^m$ of entries of $\pmb Y^{\varepsilon}$ for which 
$I(H^l,\pmb S_0^l)$ is bounded so that 
$H^m$ will never be selected over $H^l$.

\end{proof}

%%%%%%%%%%%%%%%%%%%%%%%%%%%%%%%%%%%%%%%%%%%%
\section{The finite sample breakdown point of FastHCS}
\label{app:AB}
In this appendix, we derive the finite sample breakdown point of FastHCS.
Define $\pmb Y$, $\pmb Y^\varepsilon$ and $\varepsilon^*_n$ as in Appendix A. 
Recall that 
\begin{eqnarray}
D(\pmb Y^\varepsilon,H^I,H^{PP})=\ave_{j=1}^q\log\frac{\ave_{i \in H^I}((\pmb y^\varepsilon_i-\pmb t^{I})\pmb P^{I}_j)^2}{\var_{i \in H^\bullet}(\pmb y^\varepsilon_i\pmb P^{I}_j)}\nonumber\\
-\max_{j=1}^q\log\frac{\ave_{i \in H^\bullet}((\pmb y^\varepsilon_i-\pmb t^{PP})\pmb P^{PP}_j)^2}{\var_{i \in H^{-}}(\pmb y^\varepsilon_i\pmb P^{PP}_j)},
\label{eq:recriterion}
\end{eqnarray}
where $H^-=H^{PP} \setminus H^{I}$.
Then, if $D(\pmb Y^\varepsilon,H^I,H^{PP})>0$ or if $\displaystyle\max_{j=1}^q\var_{i \in H^{-}}(\pmb y^\varepsilon_i\pmb P^{PP}_j)=0$ then the final FastHCS estimates are based on $H^{PP}$. Otherwise, they are based on $ H^{I}$.

\begin{lemma}
If $||\pmb y^\varepsilon_i||>U_1$ and $\varepsilon < \slfrac{(n-1)}{2n}$, then $i \notin H^{\bullet}$.
\end{lemma}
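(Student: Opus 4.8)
The plan is to prove the statement by showing that a far observation is ranked too outlying to survive in $H^{PP}$, and since $H^\bullet = H^I \cap H^{PP} \subseteq H^{PP}$, this already forces $i \notin H^\bullet$. First I would record the consequence of the hypothesis $\varepsilon < \slfrac{(n-1)}{2n}$: the uncontaminated observations form a strict majority of the sample. I would use this to control the two ingredients of the projection-pursuit score $d^{PP}_i$ uniformly over directions. Concretely, for every unit direction $\pmb v$ the clean observations (centered at the origin, with $\max_i\|\pmb y_i\|<U_0$) project into a bounded interval of half-width at most $U_0$, and because they are a majority both $\med(\pmb y_j \pmb v)$ and $\MAD(\pmb y_j \pmb v)$ are pinned down by this clean majority rather than by the outliers: the median lies inside the clean range and the MAD is bounded above by the clean spread. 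The one point requiring care is a strictly positive lower bound on $\MAD(\pmb y_j \pmb v)$, which I would extract from the general-position and no-duplicates assumptions on $\pmb Y$, since no $(q-1)$-flat carries more than $q$ clean points and hence fewer than half of them can concentrate at any single projected value.

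With this two-sided control in hand, the second step is to bound $d^{PP}_i$ from below for the far observation. I would take as a witness the direction determined by $\pmb y^\varepsilon_i$ and any clean observation, which is of the admissible type (a direction through two data points) that populates $B$. Along it the far point projects at distance at least $\|\pmb y^\varepsilon_i\|-U_0 > U_1 - U_0$ from the bounded median, while the denominator stays bounded, so $d^{PP}_i$ exceeds a threshold that grows with $U_1$. Running the same estimates for a clean observation yields a uniform upper bound $B_0$ on its score. Since $U_1$ is only required to be a scalar depending on the clean data, taking it large enough makes $d^{PP}_i > B_0$, so the far observation is strictly more outlying than every clean observation.

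The final step is the counting argument that converts ``$i$ outranks every clean observation'' into $i \notin H^{PP}$. Because $H^{PP}$ collects the $h$ observations of smallest $d^{PP}$ and all clean observations sit strictly below $i$ in this ordering, $i$ can enter $H^{PP}$ only if fewer than $h$ observations lie below it. This is where the strength of $\varepsilon < \slfrac{(n-1)}{2n}$ together with the value $h=\lceil\slfrac{(n+q+1)}{2}\rceil$ is used, and it is the main obstacle I anticipate: when at least $h$ observations are clean the conclusion is immediate, but the hypothesis also admits the more contaminated regime in which fewer than $h$ points are clean, and there the projection-pursuit step alone cannot expel every far outlier from $H^{PP}$. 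In that regime I would close the argument on the $H^I$ side using the explosion-breakdown analysis of Appendix~A: a far outlier with $\|\pmb z\|\geqslant U_1$ can sit in the minimizing subset only if its projection onto that subset's subspace is small (the condition $\min_m\|\pmb z\pmb P^m_0\|\leqslant U_2$), a geometric demand incompatible with simultaneously staying close to the bulk along the projection-pursuit directions, so such a point is kept out of $H^I$ and hence out of $H^\bullet$. I expect verifying this incompatibility, rather than the clean-majority estimates, to be the genuinely delicate part.
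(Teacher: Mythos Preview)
Your proof attempt takes a much longer route than the paper, which dispatches the lemma in two sentences: it simply cites the breakdown result of \cite{hcs:DH09} for the Stahel--Donoho/projection-pursuit estimator to conclude that $H^{PP}$ cannot index any observation with $\|\pmb y^\varepsilon_i\|>U_1$, and then observes $H^\bullet\subseteq H^{PP}$. No direct analysis of $d^{PP}_i$, no counting argument, and no appeal to $H^I$ is made.

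Your direct analysis of the PP score is reasonable up to the point where you notice the counting obstruction (when $g<h$, the inequality ``$i$ is outranked by every clean observation'' does not by itself force $i\notin H^{PP}$). That obstruction is real for the argument as you have set it up, and it is exactly why the paper treats the PP breakdown as a cited black box rather than re-deriving it here.

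The genuine error is in your proposed closure via $H^I$. You invoke Appendix~A.2 to say that a far outlier can lie in $H^I$ only if $\min_m\|\pmb z\pmb P_0^m\|\leqslant U_2$, and then assert this is ``incompatible with simultaneously staying close to the bulk along the projection-pursuit directions.'' But Appendix~A is precisely the statement that such outliers \emph{can} sit in $H^I$; it is the reason the PP step is introduced at all. The directions $\pmb P_0^m$ (eigenvectors of random $(q+1)$-subsets) and the PP directions in $B$ (lines through pairs of data points) are unrelated families, and nothing prevents an outlier from projecting small on the former while also surviving the PP ranking in your gap regime. You are effectively trying to use $H^I$ as a safety net for $H^{PP}$, whereas the architecture of FastHCS (and of Lemmas~1--2 in Appendix~B) runs the other way: $H^{PP}$ is the safety net for $H^I$. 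Drop the $H^I$ detour and invoke the cited breakdown result for PP directly, as the paper does.
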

\begin{proof}
~\citep{hcs:DH09} showed that the population breakdown point of $(\pmb t^{PP}, \pmb L_q^{PP},\pmb P_q^{PP})$ is 50\%, which corresponds to a finite sample breakdown point of $\slfrac{(n-1)}{2n}$. Consequently, $H^{PP}$ will not index any data point for which $||\pmb y^\varepsilon_i||>U_1$. Since $H^{\bullet}$ indexes the overlap between $H^I$ and $H^{PP}$, if $||\pmb y^\varepsilon_i||>U_1$, then $i \notin H^{\bullet}$.
\end{proof}

\begin{lemma}
When $\pmb Y$ is in general position, $n>q>2$, and $\varepsilon < \varepsilon_1 = \slfrac{(n-1)}{2n},\;  (\pmb L_q^I)_1 <\infty$.
\end{lemma}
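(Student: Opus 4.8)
The plan is to reduce the claim to a statement about the norms of the observations indexed by $H^I$, and then to exclude unbounded-norm points from $H^I$ using the explosion analysis of Appendix A together with the strict-majority hypothesis. I would begin from the inequality $(\pmb L_q^I)_1 \leqslant \max_{i \in H^I}\|\pmb y_i^\varepsilon\|$ recorded in Appendix A, so that it suffices to show $\max_{i \in H^I}\|\pmb y_i^\varepsilon\| < \infty$, i.e.\ that the minimizer $H^I$ never indexes an observation whose norm can be driven to infinity.

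I would then recall the characterization from Appendix A that an outlier $\pmb z$ capable of producing explosion must simultaneously satisfy $\|\pmb z\| \geqslant U_1$ (Equation~\eqref{eqp1}) and $\min_m\|\pmb z\pmb P_0^m\| \leqslant U_2$ (Equation~\eqref{eqp2}). The hypothesis $\varepsilon < (n-1)/(2n)$ gives $g = n-c > (n+1)/2$, so the uncontaminated points form a strict majority and, being in general position, any $q+1$ of them affinely span $\mathbb{R}^q$; relative to the subspace they determine, the clean observations have bounded scores. For an outlier that violates Equation~\eqref{eqp2}, the result of~\cite{SV14} shows that every $h$-subset indexing it has an unbounded value of $I(H^m,\pmb S_0^m)$, whereas the clean majority lets us exhibit a competing subset of bounded $I$-index; hence such an outlier is never selected into $H^I$. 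This settles every explosion-capable outlier except the genuinely orthogonal ones.

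The main obstacle is the surviving case of an outlier meeting both Equations~\eqref{eqp1} and~\eqref{eqp2}: one of unbounded norm but bounded projection onto a clean starting subspace, which the $I$-index by itself cannot detect — precisely the vulnerability flagged in Remark 3 and Appendix A. Here I would lean on the counting afforded by the strict majority together with general position, arguing that since fewer than $n/2$ observations are contaminated, any candidate subset indexing such an orthogonal outlier is strictly dominated in $I$-index by one assembled from the clean majority, so that the minimizer $H^I$ retains only bounded-norm points. Turning this domination into a rigorous inequality — quantifying how the strict-majority count forces the $I$-index minimizer to shed an unbounded-norm orthogonal outlier, rather than merely identifying such outliers as the only candidates — is the crux of the proof. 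Once it is in place, $\max_{i \in H^I}\|\pmb y_i^\varepsilon\|$ is bounded by a constant depending only on $\pmb Y$, and the reduction of the first step delivers $(\pmb L_q^I)_1 < \infty$.
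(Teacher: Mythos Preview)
Your proposal has a genuine gap, and it stems from taking the superscript $I$ in the lemma statement too literally. In context (Appendix~B is titled ``The finite sample breakdown point of FastHCS'' and Theorem~1 speaks of $\pmb L_q$ without superscript), the lemma is about the \emph{final} FastHCS eigenvalues, i.e.\ those based on $H^*$, not on $H^I$ alone. The paper's own proof does not attempt to show that $H^I$ excludes unbounded-norm outliers; on the contrary, Remark~3 and Appendix~A make explicit that the $I$-index by itself \emph{cannot} exclude orthogonal outliers satisfying both~\eqref{eqp1} and~\eqref{eqp2}. That is precisely why the PP step and the selection criterion $D(\pmb Y^\varepsilon,H^I,H^{PP})$ of Equation~\eqref{eq:recriterion} were introduced.

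Your attempt to close the orthogonal-outlier case by a ``strict-majority counting'' argument is therefore not a missing detail but a false hope: the paper's own analysis says such outliers can enter $H^I$ and cause $(\pmb L_q^I)_1$ to be unbounded. The correct route --- the one the paper takes --- is to show that \emph{when} $(\pmb L_q^I)_1$ diverges, the first term of $D(\pmb Y^\varepsilon,H^I,H^{PP})$ diverges (via~\eqref{eq:unboundNumI} against the bounded denominator~\eqref{eq:boundDen}, the latter using Lemma~1 to bound norms over $H^\bullet$), while the second term stays bounded (via~\eqref{eq:boundNumPP}). Hence $D>0$, the algorithm sets $H^*=H^{PP}$, and the explosion breakdown point is inherited from $(\pmb t^{PP},\pmb L_q^{PP},\pmb P_q^{PP})$. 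Your argument never invokes Lemma~1, $H^\bullet$, $H^{PP}$, or the criterion $D$, and without them the conclusion cannot be reached.
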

\begin{proof} 
We will proceed by showing that the denominators in Equation~\eqref{eq:recriterion} are bounded, while only the numerator dependent on $H^{PP}$ is bounded.

 Lemma 1 implies there exists a fixed constant $U_4$ such that
\begin{equation}
||\pmb y^\varepsilon_i\pmb P_j|| < U_4\; \forall\;i \in H^{\bullet},\; 1\leqslant j \leqslant q
\label{eq:boundedVarIn}
\end{equation}
for any orthogonal matrix $\pmb P$. Similarly, since the projection pursuit approach has a breakdown point of $\slfrac{(n-1)}{2n}$, there exists a fixed $U_5$ such that
\begin{equation}
||\pmb y^\varepsilon_i\pmb P_j|| < U_5\; \forall\;i \in H^{PP},\; 1\leqslant j \leqslant q
\label{eq:boundedVarInPP}
\end{equation}
As a consequence of \eqref{eq:boundedVarIn} and \eqref{eq:boundedVarInPP}, there exists a fixed constant $U_6$ such that:
\begin{eqnarray}
\sum_j \log(\var_{i \in H^\bullet}(\pmb y^\varepsilon_i\pmb P^{I}_j))&<&U_6 \label{eq:boundDen}\\
\sum_j \log(\var_{i \in H^{-}}(\pmb y^\varepsilon_i\pmb P^{PP}_j))&<&U_6\nonumber.
\end{eqnarray}
Next, note that
\begin{eqnarray}
\max_j\log(\ave_{i \in H^I}((\pmb y^\varepsilon_i-\pmb t^{I})\pmb P^{I}_j)^2) &=& (\pmb L_q^I)_1 \label{eq:maxEigV}\\
\min_j\log(\ave_{i \in H^I}((\pmb y^\varepsilon_i-\pmb t^{I})\pmb P^{I}_j)^2) &=& (\pmb L_q^I)_q\geqslant\epsilon>0, \label{eq:minEigV}
\end{eqnarray}
(Equation \eqref{eq:minEigV} follows from Appendix A, Theorem 1), so that 
\begin{equation}
\sum_j\log(\ave_{i \in H^I}((\pmb y^\varepsilon_i-\pmb t^{I})\pmb P^{I}_j)^2)
\label{eq:unboundNumI}
\end{equation}
is not bounded from above. Conversely, $(\pmb t^{PP}, \pmb L_q^{PP},\pmb P_q^{PP})$ has an explosion breakdown point of $\slfrac{(n-1)}{2n}$, so that there exists a fixed $U_8$ such that:
\begin{equation}
\sum_j \log(\ave_{i \in H^\bullet}((\pmb y^\varepsilon_i-\pmb t^{PP})\pmb P^{PP}_j)^2)< U_8.
\label{eq:boundNumPP}
\end{equation}

From Equations~\eqref{eq:boundDen} and the unboundedness of~\eqref{eq:unboundNumI} it follows that the  left-hand side in Equation~\eqref{eq:recriterion} is unbounded. However, 
by Equations~\eqref{eq:boundDen} and~\eqref{eq:boundNumPP}, the right-hand side of Equation~\eqref{eq:recriterion} is bounded from above so that in cases where  
outliers cause explosion breakdown of $(\pmb t^I, \pmb L_q^I,\pmb P_q^I)$, criterion \eqref{eq:recriterion} will select $H^* = H^{PP}$.
Since the breakdown point of $(\pmb t^{PP}, \pmb L_q^{PP},\pmb P_q^{PP})$ is $\slfrac{(n-1)}{2n}$, we have that $\varepsilon_1 = \slfrac{(n-1)}{2n}$.
\end{proof}

\begin{lemma} When $\pmb Y$ is in general position, $n>q>2$, and $\varepsilon < \varepsilon_2 = (n-h+1)/n$, then $ (\pmb L_q^I)_q > 0$. \end{lemma}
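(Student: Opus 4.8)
The plan is to show that the bound on $\varepsilon$ forces \emph{every} $h$-subset of $\pmb Y^\varepsilon$ to retain enough uncontaminated, general-position rows to span a full $q$-dimensional subspace; since $H^I$ is itself an $h$-subset, its $q$-th eigenvalue is then automatically bounded away from zero. This mirrors the forward direction of Theorem 1 in Appendix A, and the key point is that the conclusion holds for any $h$-subset, so the criterion defining $H^I$ plays no role and need not be analyzed.

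First I would translate the contamination bound into a counting statement. The hypothesis $\varepsilon < \varepsilon_2 = (n-h+1)/n$ is equivalent to $c < n-h+1$, i.e. $c \leqslant n-h$, so the number of uncontaminated rows satisfies $g = n-c \geqslant h$. Consequently any $h$-subset $H$ of $\pmb Y^\varepsilon$ can index at most $c \leqslant n-h$ outliers, hence at least $h-(n-h) = 2h-n$ rows drawn from $\pmb Y$. I would then invoke the definition $h = \lceil (n+q+1)/2 \rceil$, which gives $2h \geqslant n+q+1$ and therefore $2h-n \geqslant q+1$: every $h$-subset contains at least $q+1$ uncontaminated observations. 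Because $\pmb Y$ lies in general position in $\mathbb{R}^q$ (Definition 1), any such $q+1$ points are affinely independent and determine a $q$-simplex of non-zero volume, so their affine hull is $q$-dimensional. The sample covariance of the full $h$-subset therefore has rank at least $q$ (outliers lying outside this subspace can only raise the rank), and by the standard eigenvalue characterization~\citep{hcs:Seber} its $q$-th largest eigenvalue is strictly positive. Applied to $H^I$, this yields $(\pmb L_q^I)_q > 0$.

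The only delicate step is the combinatorial inequality $2h-n \geqslant q+1$, which is exactly why $h$ is set to $\lceil (n+q+1)/2 \rceil$: as noted in Section~\eqref{sfhcs}, this is the minimal subset size guaranteeing $q+1$ clean points in every candidate subset. At the threshold $c = n-h+1$ an adversary can instead leave only $q$ good points in some $h$-subset, which span a mere $(q-1)$-dimensional subspace; placing the outliers in that same subspace produces an exact-fit $h$-subset (so that $I=0$, forcing its selection) with $(\pmb L_q^I)_q = 0$. Hence the threshold $(n-h+1)/n$ is tight.

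Finally, to situate the lemma within Appendix~\ref{app:AB}, I would combine it with Lemma 2 and with the projection-pursuit breakdown point. Since $q>2$ gives $\varepsilon_2 = (n-h+1)/n \leqslant (n-q+1)/(2n) < (n-1)/(2n) = \varepsilon_1$, the projection-pursuit subset $H^{PP}$ has implosion breakdown $(n-1)/(2n) > \varepsilon_2$, so $(\pmb L_q^{PP})_q > 0$ for $\varepsilon < \varepsilon_2$ as well. As the final FastHCS estimate is either $(\pmb t^I,\pmb L_q^I,\pmb P_q^I)$ or $(\pmb t^{PP},\pmb L_q^{PP},\pmb P_q^{PP})$, neither of which implodes below $\varepsilon_2$, the overall finite sample implosion breakdown point of FastHCS equals $\varepsilon_2 = (n-h+1)/n$.
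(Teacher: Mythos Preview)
Your proposal is correct and follows essentially the same route as the paper: the paper's proof of this lemma simply invokes Theorem~1 of Appendix~A for the implosion breakdown of $(\pmb t^I,\pmb L_q^I,\pmb P_q^I)$ and then compares with the higher PP implosion breakdown $(n-1)/(2n)$, which is exactly what you do in your first and third paragraphs (you merely unpack the counting argument behind Theorem~1 instead of citing it). Your tightness discussion in the second paragraph is a nice addition that the paper leaves implicit.
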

\begin{proof}
By Appendix A, Theorem 1, we have that the implosion breakdown point of $(\pmb t^I, \pmb L_q^I,\pmb P_q^I)$ is $\slfrac{(n-h+1)}{n}$. The implosion breakdown point of $(\pmb t^{PP}, \pmb L_q^{PP},\pmb P_q^{PP})$ is $\slfrac{(n-1)}{2n}$, which is higher, so it follows that $\varepsilon_2=\slfrac{(n-h+1)}{n}$. 
\end{proof}

\begin{theorem}
For $n>p+1>2$, the finite sample breakdown point of  $\pmb L_q$ is 
\begin{equation}
\varepsilon_n^*(\pmb L_q,\pmb Y^\varepsilon)=\slfrac{(n-h+1)}{n}. \nonumber
\label{eq:eigenbreakdown}
\end{equation}
\end{theorem}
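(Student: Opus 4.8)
The plan is to assemble the explosion and implosion breakdown points established in Lemmas~2 and~3 and take their minimum, as dictated by the definition $\varepsilon_n^* = \min(\varepsilon_1,\varepsilon_2)$ recalled at the beginning of Appendix~\ref{sec:IindVul}. Lemma~2 already shows that the explosion breakdown point of the final FastHCS eigenvalues is $\varepsilon_1 = \slfrac{(n-1)}{2n}$: whenever outliers would explode the $I$-index fit, the selection rule~\eqref{eq:recriterion} switches to $H^{PP}$, whose explosion breakdown is $\slfrac{(n-1)}{2n}$. Lemma~3 shows that the implosion breakdown point is $\varepsilon_2 = \slfrac{(n-h+1)}{n}$, inherited from the $I$-index fit through Appendix~A, Theorem~1, since the PP fit implodes only at the higher rate $\slfrac{(n-1)}{2n}$. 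All of the substantive work is therefore already done; what remains is to determine which of $\varepsilon_1,\varepsilon_2$ is the smaller.

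First I would reduce the claim to the arithmetic inequality $\varepsilon_2 \leqslant \varepsilon_1$, namely
\begin{equation}
\frac{n-h+1}{n} \leqslant \frac{n-1}{2n}, \nonumber
\end{equation}
which upon clearing denominators is equivalent to $2h \geqslant n+3$. Next I would substitute the definition $h = \lceil\slfrac{(n+q+1)}{2}\rceil$, giving $2h \geqslant n+q+1$. Because the standing hypothesis of Appendix~A is $q>2$, hence $q\geqslant 3$, we obtain $2h \geqslant n+4 > n+3$, so the inequality holds strictly and $\varepsilon_2 < \varepsilon_1$. Taking the minimum then yields $\varepsilon_n^* = \min(\varepsilon_1,\varepsilon_2) = \varepsilon_2 = \slfrac{(n-h+1)}{n}$, as claimed.

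I do not expect a genuine obstacle, since the delicate estimates pinning down $\varepsilon_1$ and $\varepsilon_2$ have been carried out in Lemmas~1--3. The one point deserving a remark is the compatibility of hypotheses: the theorem assumes $n>p+1>2$, whereas the lemmas assume $n>q>2$ together with general position. Because $q<\min(p,n)=p$ when $n>p$, the chain $n>p>q>2$ holds, so both lemmas apply; the stated bound $n>p+1>2$ is thus a convenient sufficient condition ensuring the regime in which the lemmas were proved.
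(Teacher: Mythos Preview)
Your argument is correct and follows the same route as the paper: invoke Lemmas~2 and~3 to obtain $\varepsilon_1=\slfrac{(n-1)}{2n}$ and $\varepsilon_2=\slfrac{(n-h+1)}{n}$, then take the minimum. You go a step further than the paper by actually verifying the inequality $2h\geqslant n+3$ via $h=\lceil(n+q+1)/2\rceil$ and $q\geqslant 3$, which the paper simply asserts; this is a welcome addition. One small caveat on your closing remark: the fact that $q>2$ does not follow from the theorem's hypothesis $n>p+1>2$ alone (that only gives $p\geqslant 2$); it is rather the standing assumption $2<q<\min(p,n)$ of Appendix~A that supplies it, so the ``chain $n>p>q>2$'' should be justified by citing that standing hypothesis directly rather than deriving it from $n>p+1>2$.
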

\begin{proof}
The finite sample breakdown point of $\pmb L_q = \min(\varepsilon_1, \varepsilon_2)$. Given Lemmas 2 and 3,  $\min(\slfrac{(n-1)}{2n}, (n-h+1)/n) = (n-h+1)/n$.
\end{proof}

\section{Measures of dissimilarity for robust PCA fits.}
The objective of the simulation studies in Section~\ref{mcs:s5} is to measure how much the fitted PCA parameters $(\pmb t,\pmb L_q^{},\pmb P_q^{})$ obtained by four robust PCA methods deviate from the true 
$(\pmb \mu^u,\pmb\Lambda_q^{u},\pmb \Pi_q^{u})$ when they are exposed to outliers.
One way to compare PCA fits is with respect to their eigenvectors, as in the \emph{maxsub} criterion~\citep{BG73}:
\begin{equation} 
\text{maxsub}(\pmb P_q)=\text{arccos}(\lambda_q^{\slfrac{1}{2}}(\pmb D_q)),\nonumber 
\end{equation}
where $\lambda_q(\pmb D_q)$ is the smallest eigenvalue of the matrix $ \pmb D_q^{}=\pmb\Pi_q^\top\pmb P_q^{}\pmb P_q^\top \pmb\Pi_q^{}$.
The maxsub has an appealing geometrical interpretation as it represents the maximum angle between a vector in $\pmb \Pi_q$ and the vector 
most parallel to it in $\pmb P_q$. However, it does not exhaustively account for the dissimilarity between two sets of eigenvectors. 
As an alternative to the \emph{maxsub}, \cite{K79} proposes the total dissimilarity:
\begin{equation}\label{spher} 
\text{sumsub}(\pmb P_q)=\sum_{j=1}^q\lambda_j(\pmb D_q),
\end{equation}
which is an exhaustive measure of dissimilarity for orthogonal matrices.
Furthermore, because $\sum_{j=1}^q\lambda_j(\pmb D_q)=\Tr(\pmb D_q)$ and $|\pmb D_q|=1$~\citep{K79}, it is readily 
seen that~\eqref{spher} is a measure of sphericity of $\pmb D_q$ (it is proportional to the likelihood
ratio test statistics for non-sphericity of $\pmb D_q$~\citep[p. 333-335]{m82}). 
However, note that~\eqref{spher} now forfeits the geometric interpretation enjoyed by the \emph{maxsub}.

 In any case, measures of dissimilarity based solely on eigenvectors, such as the \emph{maxsub} 
  or \emph{sumsub}, necessarily fail to account for bias in the estimation of the eigenvalues. 
 This is problematic when used to evaluate robust
 fits because it is possible for outliers to exert substantially more influence on $\pmb L_q$ than on $\pmb P_q$. 
 An extreme example is given by the so-called good leverage type of contamination in which the outliers lie on the subspace spanned by $\pmb \Pi_q$ so that even the classical PCA estimate  (whose eigenvalues can be made arbitrarely bad by such outliers) will have low values of $\text{maxsub}(\pmb P_q)$.
 
  In contrast, we are interested in an exhaustive measure of dissimilarity; one that summarizes the the effects of the outliers on all the parameters of the PCA fit into a single number, so that the algorithms can be ranked in terms total dissimilarity. To construct such a measure, it is logical to base it on $\pmb \varSigma_q^u=\pmb \Pi_q^{u}\pmb \Lambda_q^{u}(\pmb \Pi_q^{u})^{\top}$ and its estimate $\pmb V_q=\pmb P_q^{}\pmb L_q^{}\pmb P_q^{\top}$ because they contain
  all the parameters of the fitted model. For our purposes, one need to only consider the effects of outliers on
 $\pmb G_q=|\pmb V_q|^{-1/q}\pmb V_q$, the shape component of $\pmb V_q$~\citep{r14}. This is because to rank
 the observations in a contaminated sample in terms of their true outlyingness (and thus reveal the outliers), 
 it is sufficient to estimate the shape component of $\pmb \varSigma_q^u$ correctly. 
  Consequently, an exhaustive measure of dissimilarity between $\pmb G_q$ and $\pmb \Gamma_q=|\pmb \varSigma_q|^{-1/q}\pmb \varSigma_q$
 is given by $\phi((\pmb \Gamma^u_q)^{-1/2}\pmb G_{q}(\pmb \Gamma^u_q)^{-1/2})$, where $\phi$ is any measure of non-sphericity of its argument. In practice 
 several choices of $\phi$ are possible, the simplest being the condition number of $\pmb W$ which is defined as the ratio of the largest to the smallest eigenvalue of $\pmb W$~\citep{MY95}, explaining the definition of $\text{bias}(\pmb V_q)$. 
\end{appendix}

\vspace{-3mm}
\bibliographystyle{spmpsci}

\end{document}